\newcommand{\EXPTIME}{{\sc Exptime}\xspace}
\newcommand{\PSPACE}{{\sc Pspace}\xspace}
\newcommand{\EXPSPACE}{{\sc Expspace}\xspace}
\newcommand{\Nat}{{\mathbb{N}}}
\newcommand{\RealP}{{\mathbb{R}_+}}
\DeclareMathAlphabet{\mathpzc}{OT1}{pzc}{m}{it}
\newcommand{\Prop}{\mathcal{P}}
\newcommand{\tpl}[1]{(#1)}
\newcommand{\Lang}{{\mathcal{L}}}
\newcommand{\TLang}{{\mathcal{L}_T}}
\newcommand{\TLangInf}{{\mathcal{L}^{\omega}_T}}
\newtheorem{definition}{Definition}
\newtheorem{theorem}{Theorem}
\newtheorem{proposition}{Proposition}
\newtheorem{lemma}{Lemma}
\newcommand{\MSO}{\text{\sffamily MSO}}
\newcommand{\TA}{\text{\sffamily TA}}
\newcommand{\PDA}{\text{\sffamily PDA}}
\newcommand{\VPA}{\text{\sffamily VPA}}
\newcommand{\VPTA}{\text{\sffamily VPTA}}
\newcommand{\VPL}{\text{\sffamily VPL}}
\newcommand{\ECA}{\text{\sffamily ECA}}
\newcommand{\CARET}{\text{\sffamily CaRet}}
\newcommand{\ECNA}{\text{\sffamily ECNA}}
\newcommand{\ECVPA}{\text{\sffamily ECVPA}}
\newcommand{\LTL}{\text{\sffamily LTL}}
\newcommand{\MTL}{\text{\sffamily MTL}}
\newcommand{\NMTL}{\text{\sffamily NMTL}}
\newcommand{\MITL}{\text{\sffamily MITL}}
\newcommand{\MITLS}{\text{{\sffamily MITL}$_{(0,\infty)}$}}
\newcommand{\NMITLS}{\text{{\sffamily NMITL}$_{(0,\infty)}$}}
\newcommand{\ECTL}{\text{\sffamily EC\_TL}}
\newcommand{\ECNTL}{\text{\sffamily EC\_NTL}}
\newcommand{\Cl}{\textsf{Cl}}
\newcommand{\Next}{\LTLcircle}
\newcommand{\Prev}{\LTLcircleminus}
\newcommand{\Always}{\LTLsquare}
\newcommand{\Eventually}{\LTLdiamond}
\newcommand{\StrictAlways}{\LTLsquarehat}
\newcommand{\StrictEventually}{\LTLdiamondhat}
\newcommand{\StrictPastAlways}{\LTLsquareminushat}
\newcommand{\StrictPastEventually}{\LTLdiamondminushat}
\newcommand{\NextClock}{\rhd}
\newcommand{\PrevClock}{\lhd}
\newcommand{\Until}{\textsf{U}}
\newcommand{\Since}{\textsf{S}}
\newcommand{\StrictUntil}{\widehat{\textsf{U}}}
\newcommand{\StrictSince}{\widehat{\textsf{S}}}
\newcommand{\caller}{\mathsf{c}}
\newcommand{\Global}{\mathsf{g}}
\newcommand{\Scall}{\Sigma_{\mathit{call}}}
\newcommand{\call}{{\mathit{call}}}
\newcommand{\ret}{{\mathit{ret}}}
\newcommand{\intA}{{\mathit{int}}}
\newcommand{\Caller}{{\mathit{Caller}}}
\newcommand{\Sret}{\Sigma_{\mathit{ret}}}
\newcommand{\Sint}{\Sigma_{\mathit{int}}}
\newcommand{\Au}{\ensuremath{\mathcal{A}}}
\newcommand{\abs}{\mathsf{a}}
\newcommand{\SUCC}{\mathsf{succ}}
\newcommand{\NULL}{\mathsf{\vdash}}
\newcommand{\Pos}{{\mathit{Pos}}}
\newcommand{\val}{{\mathit{val}}}
\newcommand{\Const}{{\mathit{Const}}}
\newcommand{\dir}{\textit{dir}}
\newcommand{\Proj}{\textit{Proj}}
\newcommand{\NextPrev}{\textit{Next}}
\newcommand{\AbsNextPrev}{\textit{AbsNext}}
\newcommand{\MAP}{\textit{MAP}}
\newcommand{\INTS}{\ensuremath{\mathcal{I}_{(0,\infty)}}}
\newcommand{\Lab}{{\textit{Lab}}}
\newcommand{\Succ}{{\textit{succ}}}
\newcommand{\halt}{{\textit{halt}}}
\newcommand{\init}{{\textit{init}}}
\newcommand{\Inst}{\mathsf{Inst}}
\newcommand{\dec}{{\textit{dec}}}
\newcommand{\zero}{{\textit{zero}}}
\newcommand{\Fragm}{\mathcal{F}}
\newcommand{\DefORmini}{\ensuremath{\;\big|\;}}
\newcommand{\true}{\ensuremath{\top}}
\newcommand{\details}[1]{{}}%
\title{Timed Context-Free Temporal Logics\footnote{The work by Adriano Peron and Aniello Murano has been partially supported by the  GNCS project \emph{Formal methods for verification and synthesis of discrete and hybrid systems} and by Dept. project MODAL \emph{MOdel-Driven Analysis of Critical Industrial Systems.}}}
\author{Laura Bozzelli \qquad Aniello Murano \qquad Adriano Peron
\institute{University of Napoli ``Federico II'', Napoli, Italy}
}
\begin{document}

%\mainmatter              % start of the contributions
%
\maketitle              % typeset the title of the contribution

\begin{abstract}
%\textbf{TO COMPLETE}
The paper is focused on temporal logics for the description of the behaviour of
real-time pushdown reactive systems. The paper is motivated to bridge tractable logics specialized for expressing separately dense-time real-time properties and context-free properties
by ensuring decidability and tractability in the combined setting. %The challenge is to
%properly choose the highest expressive power of temporal and context-free modalities
%though maintaining tractability of satisfiability and model checking problems in the combined setting.
To this end we introduce two real-time linear temporal logics for specifying quantitative timing context-free requirements in a pointwise semantics setting: \emph{Event-Clock Nested Temporal Logic} (\ECNTL) and \emph{Nested Metric Temporal Logic} (\NMTL). The logic \ECNTL\ is an extension of both the %linear temporal
logic \CARET\ (a context-free
extension of standard \LTL) and \emph{Event-Clock Temporal Logic}
%~\cite{RaskinS99}
(a tractable real-time logical framework related to the class of Event-Clock automata).  We prove that %the problems of
satisfiability  of \ECNTL\ and visibly model-checking of Visibly Pushdown Timed Automata (\VPTA) against \ECNTL\ are decidable and \EXPTIME-complete. The other proposed logic \NMTL\ is a context-free extension of standard  Metric Temporal Logic (\MTL). It is well known that satisfiability of future
$\MTL$ is undecidable when interpreted over infinite timed words but decidable over finite timed words. On the other hand, %The interesting result is that
we show that by augmenting future \MTL\ with future context-free temporal operators, the satisfiability problem turns out to be undecidable also for finite timed words. On the positive side, we devise a meaningful and decidable %tractable
 fragment of the logic \NMTL\
which is expressively equivalent  to \ECNTL\ and for which satisfiability and visibly model-checking of \VPTA\ are  \EXPTIME-complete.
\end{abstract}

\section{Introduction}

\emph{Model checking} is a well-established formal-method technique to automatically check for global correctness of reactive systems~\cite{Baier}.
In this setting, temporal logics provide a fundamental framework for the description of the
dynamic behavior of reactive systems.

In the last two decades,
model checking of pushdown automata (\PDA) has received a lot of attention~\cite{Wal96,CMM+03,AlurMadhu04,BMP10}. %,MP15
\PDA\ represent an infinite-state formalism suitable to model the control flow of typical sequential programs with
nested and recursive procedure calls. Although  the general problem of checking context-free properties of \PDA\ is undecidable, %~\cite{KPV02},
algorithmic solutions have been proposed for interesting subclasses of context-free requirements~\cite{AlurEM04,AlurMadhu04,CMM+03}. %AlurMadhu09,
A relevant example is that of the linear temporal logic \CARET\ \cite{AlurEM04}, a context-free
extension of standard \LTL.  \CARET\ formulas are interpreted on  words over a  \emph{pushdown alphabet}  which is partitioned into three
disjoint sets of calls, returns, and internal symbols. A call
 denotes invocation of a procedure (i.e.
 a push stack-operation)
and the \emph{matching} return (if any) along a given word denotes
the exit from this procedure (corresponding to a pop
stack-operation). \CARET\ allows to specify \LTL\ requirements over two kinds of \emph{non-regular}
patterns on input words: \emph{abstract paths} and \emph{caller paths}. An abstract path captures the local computation
within a procedure with the removal of subcomputations corresponding
 to nested procedure calls, while a caller path represents the call-stack content  at a given position of the input.
An automata theoretic generalization of \CARET\ is the class of (nondeterministic)
 \emph{Visibly Pushdown Automata} (\VPA)~\cite{AlurMadhu04}, a subclass %,AlurMadhu09, a subclass
of \PDA\ where the input symbols over a  pushdown alphabet  control the admissible operations
on the stack. \VPA\ push onto the stack only when a call is
 read, pops the stack only at returns, and do not use the stack on
 reading internal symbols.
 This restriction makes the class of
resulting languages  (\emph{visibly pushdown languages} or \VPL)
very similar in tractability and robustness to the less expressive class of regular languages~\cite{AlurMadhu04}. %,AlurMadhu09
 In fact,
\VPL\ are closed under Boolean operations, and language inclusion, which is undecidable for context-free languages, is \EXPTIME-complete for \VPL.\vspace{0.1cm}

\noindent  \textbf{Real-time pushdown model-checking.} Recently, many works~\cite{AbdullaAS12,BenerecettiMP10,BenerecettiP16,BouajjaniER94,ClementeL15,EmmiM06,TrivediW10}  have investigated real-time extensions of \PDA\  by combining \PDA\ with
 \emph{Timed Automata} (\TA)~\cite{AlurD94}, a model widely used
to represent real-time systems. \TA\ are finite automata augmented with a finite set of real-valued clocks, which operate over words where each symbol is paired with a real-valued timestamp (\emph{timed words}).
All the clocks progress at the same speed and can
be reset by transitions (thus, each clock
keeps track of the elapsed time since the last reset). %Constraints on clocks are associated with transitions to restrict the behavior of the automaton.
 The emptiness problem for \TA\ is decidable and \PSPACE-complete~\cite{AlurD94}.
However, since in \TA, clocks can be reset nondeterministically and independently of each other, the resulting class of timed languages is not closed under complement and, moreover,
language inclusion  is undecidable~\cite{AlurD94}. As a consequence, the
general verification problem (i.e., language inclusion) of formalisms combining unrestricted \TA\ with robust subclasses of \PDA\ such as \VPA , i.e.
\emph{Visibly Pushdown Timed Automata} (\VPTA), is undecidable as well.
In fact,  checking language inclusion for \VPTA\  is undecidable even in the restricted case of specifications using at most one clock~\cite{EmmiM06}.
More robust approaches \cite{TangO09,BhaveDKPT16,BMP18}, although less expressive, are based on formalisms combining \VPA\ and \emph{Event-clock automata} (\ECA)~\cite{AlurFH99} such as the recently introduced class
of \emph{Event-Clock Nested Automata} (\ECNA)~\cite{BMP18}.
\ECA\ \cite{AlurFH99} are a well-known determinizable subclass of \TA\ where the explicit reset of clocks is disallowed. In \ECA, clocks have a predefined association with the input alphabet symbols  and their values refer
to the time distances from previous and next occurrences of input symbols. \ECNA\ \cite{BMP18} combine \ECA\ and \VPA\ by providing an explicit mechanism to relate the use of a stack with that of event clocks.
In particular, \ECNA\ retain the %same
closure and decidability properties of \ECA\ and \VPA\ being closed under Boolean operations and having a decidable (specifically, \EXPTIME-complete) language-inclusion problem, and are strictly more expressive
than other formalisms combining \ECA\ and \VPA\ \cite{TangO09,BhaveDKPT16} such as the class  of \emph{Event-Clock Visibly Pushdown Automata} (\ECVPA)~\cite{TangO09}. In~\cite{BhaveDKPT16} a logical characterization of the class
of \ECVPA\ is provided by means of a non-elementarily decidable extension of standard \MSO\ over words.\vspace{0.1cm}

\noindent  \textbf{Our contribution.} In this paper, we introduce two real-time linear temporal logics, called \emph{Event-Clock Nested Temporal Logic} (\ECNTL) and \emph{Nested Metric Temporal Logic} (\NMTL)
for specifying quantitative timing context-free requirements in a pointwise semantics setting (models of formulas are timed words).
The logic \ECNTL\ is an extension of \emph{Event-Clock Temporal Logic} (\ECTL)~\cite{RaskinS99}, the latter being a known decidable and tractable real-time logical framework related to the class of Event-clock automata.
\ECTL\ extends \LTL\ + past with timed temporal modalities which specify time constraints on the distances from the previous or next timestamp where a given subformula
holds. The novel logic \ECNTL\ is an extension of both \ECTL\ and \CARET\ by means of non-regular versions of the timed modalities of \ECTL\ which allow to refer to abstract and caller paths.
We address expressiveness and complexity issues for the logic \ECNTL. In particular, we establish that satisfiability of \ECNTL\ and visibly model-checking of \VPTA\ against \ECNTL\ are decidable and \EXPTIME-complete.
The key step in the proposed decision procedures is a translation
of \ECNTL\ into  \ECNA\ accepting suitable encodings of the models of the given formula.

The second logic we introduce, namely \NMTL, is a  context-free extension of standard  Metric Temporal Logic (\MTL). This extension is obtained by adding
to \MTL\  timed versions of the caller and abstract temporal modalities of \CARET. In the considered pointwise-semantics settings, it is well known that
satisfiability of future \MTL\ is undecidable when  interpreted over infinite timed words~\cite{OuaknineW06}, and decidable~\cite{OuaknineW07} over finite timed words. We show that over finite timed words, the adding of the future abstract timed modalities to future \MTL\   makes the satisfiability problem undecidable. On the other hand, we show that the fragment
\NMITLS\ of \NMTL\ (the \NMTL\ counterpart of the well-known tractable fragment \MITLS\ \cite{AlurFH96} of \MTL) has the same expressiveness
as the logic \ECNTL\ and the related satisfiability and visibly model-checking problems are \EXPTIME-complete. The overall picture of decidability results is reported in table~\ref{results} (new results in red).

%%%%%%%%%%%%%%%%%%

\begin{table}[tb]
	%\renewcommand{\arraystretch}{1.0}
	%\centering
\begin{center}
	\caption{Decidability results.}\label{results}
%	\vspace*{-0.2cm}
	%\resizebox{\linewidth}{0.8\height}{
		\begin{tabular}{ cclc }
			\hline
			% \rule[-1ex]{0pt}{3.5ex}
 Logic & Satisfiability &  Visibly model checking\\
			\hline
			
%			&   &   & \multirow{5}{*}{\input{allensRels.tex}}\\
			
		{\color{red}	\ECNTL\ } & 	{\color{red} \EXPTIME-complete} & 	{\color{red} \EXPTIME-complete} \\
			
			{\color{red}	\NMITLS}\ & 	{\color{red} \EXPTIME-complete} & 	{\color{red} \EXPTIME-complete} \\
			
			future \MTL\ fin. & Decidable & \\
			
			future \MTL\ infin. & Undecidable & \\
			
			{\color{red} future \NMTL\ fin.} & 	{\color{red} Undecidable} &\\
			
			\hline
	\end{tabular}%}
\end{center}
\end{table}

%%%%%%%%%%%%%%%%

Due to space limitations, some proofs are omitted: we refer for complete proofs to the complete version of the paper in
\cite{compGandalf18}.

\section{Preliminaries}\label{sec:backgr}

In the following, $\Nat$ denotes the set of natural numbers and $\RealP$ the set of non-negative real numbers.
Let $w$ be a finite or infinite word over some alphabet. By $|w|$ we denote the length of $w$ (we write $|w|=\infty$ if $w$ is infinite). For all  $i,j\in\Nat $, with $i\leq j <|w|$, $w_i$ is
$i$-th letter of $w$, while $w[i,j]$ is the finite subword
%of $w$ given by
 $w_i\cdots w_j$.

A  \emph{timed word} $w$ over a finite alphabet $\Sigma$ is
a  word $w=(a_0,\tau_0) (a_1,\tau_1),\ldots$ over $\Sigma\times \RealP$ %(intuitively, %for each $i$,
($\tau_i$ is the time at which $a_i$ occurs) such that the sequence $\tau= \tau_0,\tau_1,\ldots$ of timestamps  satisfies: (1) $\tau_{i-1}\leq \tau_{i}$ for all $0<i<|w|$ (monotonicity), and (2) if $w$ is infinite, then for all $t\in\RealP$, $\tau_i\geq t$ for some $i\geq 0$
(divergence). The timed word $w$ is also denoted by the pair $(\sigma,\tau)$, where $\sigma$ is the untimed word $a_0 a_1\ldots$.
%and $\tau$ is the sequence of timestamps.
A \emph{timed language} (resp., \emph{$\omega$-timed language}) over $\Sigma$ is a set of finite  (resp., infinite) timed words over $\Sigma$.\vspace{0.2cm}

\noindent \textbf{Pushdown alphabets, abstract paths, and caller paths.}
A \emph{pushdown alphabet} is
a finite alphabet $\Sigma=\Scall\cup\Sret\cup\Sint$ which is partitioned into a set $\Scall$ of \emph{calls}, a set
$\Sret$ of \emph{returns}, and a set $\Sint$ of \emph{internal
  actions}. The pushdown alphabet $\Sigma$  induces a nested hierarchical structure in a given word over $\Sigma$ obtained by associating
to each call the corresponding matching return (if any) in a well-nested manner. Formally, the set  of \emph{well-matched words} is the set of finite words $\sigma_w$ over
$\Sigma$ inductively defined as follows:% by the following grammar:
\[
\sigma_w:= \varepsilon \DefORmini
a\cdot \sigma_w \DefORmini
c\cdot \sigma_w \cdot r \cdot \sigma_w
\]
where $\varepsilon$ is the empty word, $a\in\Sint$, $c\in \Scall$, and $r\in\Sret$.

Fix a  word $\sigma$ over $\Sigma$. For a call position $i$ of $\sigma$,
if there is $j>i$ such that $j$ is a return position of $\sigma$ and
$\sigma[i+1,j-1]$ is a well-matched word (note that $j$ is
uniquely determined if it exists), we say that $j$ is the
\emph{matching return} of $i$ along $\sigma$.
For a position $i$ of $\sigma$,
the \emph{abstract successor of $i$ along $\sigma$}, denoted
  $\SUCC(\abs,\sigma,i)$, is defined as follows:
  \begin{compactitem}
  \item If $i$ is a call,  then
    $\SUCC(\abs,\sigma,i)$ is the matching return of $i$ if such a matching return exists; otherwise
     $\SUCC(\abs,\sigma,i)=\NULL$ ($\NULL$ denotes the
    \emph{undefined} value).
  \item If $i$ is not a call, then $\SUCC(\abs,\sigma,i)=i+1$ if $i+1<|\sigma|$ and $i+1$ is
    not a return position, and $\SUCC(\abs,\sigma,i)=\NULL$, otherwise.
  \end{compactitem}
The \emph{caller of $i$ along $\sigma$}, denoted   $\SUCC(\caller,\sigma,i)$, is instead defined as follows:
\begin{compactitem}
  \item if there exists the greatest call position $j_c<i$ such that either   $\SUCC(\abs,\sigma,j_c)=\NULL$ or
  $\SUCC(\abs,\sigma,j_c)>i$, then  $\SUCC(\caller,\sigma,i)=j_c$; otherwise, $\SUCC(\caller,\sigma,i)=\NULL$.
  \end{compactitem}
We also consider  the \emph{global successor} $\SUCC(\Global,\sigma,i)$ of $i$ along $\sigma$ given by   $i+1$ if $i+1<|\sigma|$,
and undefined otherwise.
A \emph{maximal abstract path} (\MAP) of $\sigma$ is a \emph{maximal} (finite or infinite) increasing sequence
of natural numbers $\nu= i_0<i_1<\ldots$ such that
$i_j=\SUCC(\abs,\sigma,i_{j-1})$ for all $1\leq j<|\nu|$. Note that for every position $i$ of $\sigma$, there is exactly one \MAP\ of $\sigma$ visiting
position $i$.  For each $i\geq 0$, the \emph{caller path of $\sigma$ from position $i$} is the maximal
(finite) decreasing sequence of natural numbers $j_0>j_1\ldots >j_n$ such that $j_0= i$ and $j_{h+1}=\SUCC(\caller,\sigma,j_{h})$ for all $0\leq h <n$.
  %The abstract path $\nu$ is \emph{maximal} (both in the past and in the future)
%if there is no abstract path of $\sigma$ having $\nu$ as a proper subsequence.
Note that all the positions of a \MAP\ have the same caller (if any). Intuitively, in the analysis of recursive programs, a maximal  abstract path
captures the local computation within a
procedure removing computation fragments corresponding to nested
calls, while the caller path represents the call-stack
content at a given position of the input.

For instance, consider the finite untimed word $\sigma$ of length $10$ depicted in Figure~\ref{fig-word}
where $\Scall =\{c\}$, $\Sret =\{r\}$, and $\Sint =\{\imath\}$. Note that $0$ is the unique unmatched call position of $\sigma$: hence, the \MAP\ visiting $0$ consists of just position $0$ and has no caller. The \MAP\ visiting position $1$ is the   sequence $1,6,7,9,10$ and the associated caller is position $0$.
The \MAP\ visiting position $2$ is the sequence $2,3,5$ and the associated caller is position $1$, and the \MAP\ visiting position $4$ consists of just position $4$ whose caller path is $4,3,1,0$.

%\begin{wrapfigure}[6]{l}{0.5\linewidth}
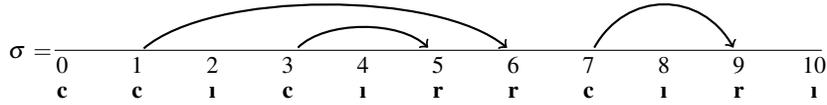
\begin{figure}[H]
\centering
\vspace{-0.2cm}
\begin{tikzpicture}[scale=1]
%\draw[draw=none,use as bounding box](-2.0,1.4) rectangle (2.0,-0.9);

\node (Word) at (-0.2,0.2) {};
\coordinate [label=left:{\footnotesize  $\sigma$\,\,$=$}] (Word) at (0.0,0.17);
\path[thin,black] (-0.1,0.2) edge   (10.1,0.2);

\node (NodeZero) at (0.0,0.0) {};
\coordinate [label=center:{\footnotesize  $0$}] (NodeZero) at (0.0,0.0);
\coordinate [label=below:{\footnotesize   \textbf{c}}] (NodeZero) at (0.0,-0.15);

\node (NodeOne) at (1.0,0.0) {};
\coordinate [label=center:{\footnotesize  $1$}] (NodeOne) at (1.0,0.0);
\coordinate [label=below:{\footnotesize   \textbf{c}}] (NodeOne) at (1.0,-0.15);

\node (NodeTwo) at (2.0,0.0) {};
\coordinate [label=center:{\footnotesize  $2$}] (NodeTwo) at (2.0,0.0);
\coordinate [label=below:{\footnotesize   \textbf{\i}}] (NodeTwo) at (2.0,-0.15);

\node (SourceOne) at (0.95,0.11) {};
\node (SourceSix) at (6.05,0.12) {};
\node (SourceSeven) at (7.0,0.10) {};
\node (SourceNine) at (9.0,0.10) {};
\node (SourceThree) at (3.0,0.10) {};
\node (SourceFive) at (5.0,0.10) {};

%\draw[->,thick,black] (SourceOne) .. controls (2.0,1.0) and  (5.0,1.0)  .. node[above] {\footnotesize $\SUCC(\abs,\sigma,1)$}  (SourceSix);
%\draw[->,thick,black] (SourceSeven) .. controls (7.5,1.0) and  (8.5,1.0)  .. node[above] {\footnotesize $\SUCC(\abs,\sigma,7)$}  (SourceNine);

\draw[->,thick,black] (SourceOne) .. controls (2.0,1.0) and  (5.0,1.0)  ..   (SourceSix);
\draw[->,thick,black] (SourceSeven) .. controls (7.5,1.0) and  (8.5,1.0)  ..   (SourceNine);

\node (NodeThree) at (3.0,0.0) {};
\coordinate [label=center:{\footnotesize  $3$}] (NodeThree) at (3.0,0.0);
\coordinate [label=below:{\footnotesize   \textbf{c}}] (NodeThree) at (3.0,-0.15);

\node (NodeFour) at (4.0,0.0) {};
\coordinate [label=center:{\footnotesize  $4$}] (NodeFour) at (4.0,0.0);
\coordinate [label=below:{\footnotesize   \textbf{\i}}] (NodeFour) at (4.0,-0.15);

\node (NodeFive) at (5.0,0.0) {};
\coordinate [label=center:{\footnotesize  $5$}] (NodeFive) at (5.0,0.0);
\coordinate [label=below:{\footnotesize   \textbf{r}}] (NodeFive) at (5.0,-0.15);

\draw[->,thick,black] (SourceThree) .. controls (3.5,0.6) and  (4.5,0.6)  .. node[above] {}  (SourceFive);

\node (NodeSix) at (6.0,0.0) {};
\coordinate [label=center:{\footnotesize  $6$}] (NodeSix) at (6.0,0.0);
\coordinate [label=below:{\footnotesize   \textbf{r}}] (NodeSix) at (6.0,-0.15);

\node (NodeSeven) at (7.0,0.0) {};
\coordinate [label=center:{\footnotesize  $7$}] (NodeSeven) at (7.0,0.0);
\coordinate [label=below:{\footnotesize   \textbf{c}}] (NodeSeven) at (7.0,-0.15);

\node (NodeEight) at (8.0,0.0) {};
\coordinate [label=center:{\footnotesize  $8$}] (NodeEight) at (8.0,0.0);
\coordinate [label=below:{\footnotesize   \textbf{\i}}] (NodeEight) at (8.0,-0.15);

\node (NodeNine) at (9.0,0.0) {};
\coordinate [label=center:{\footnotesize  $9$}] (NodeNine) at (9.0,0.0);
\coordinate [label=below:{\footnotesize   \textbf{r}}] (NodeNine) at (9.0,-0.15);

\node (NodeTen) at (7.0,0.0) {};
\coordinate [label=center:{\footnotesize  $10$}] (NodeTen) at (10.0,0.0);
\coordinate [label=below:{\footnotesize   \textbf{\i}}] (NodeTen) at (10.0,-0.15);

\end{tikzpicture}
\caption{\label{fig-word} An untimed word over a pushdown alphabet}
\vspace{-0.2cm}
\end{figure}

\section{Event-clock nested automata}

In this section, we recall the class of \emph{Event-Clock Nested Automata} (\ECNA)~\cite{BMP18}, a formalism that combines Event Clock Automata (\ECA)~\cite{AlurFH99} and Visibly Pushdown Automata (\VPA)~\cite{AlurMadhu04}
 %,AlurMadhu09
 by allowing a combined used of event clocks and visible operations on the stack. %To this end, we augment the standard set of event clocks~\cite{AlurFH99} with a set of \emph{abstract event clocks} and a set of \emph{caller event clocks} whose values are determined by considering maximal abstract paths and caller paths of the given word, respectively. % which measure the time elapsed since the last occurrence (resp., the time to wait for the next occurrence) of the associated event along the maximal abstract path of the given timed word visiting the current position.

 Here, we adopt a propositional-based approach, where the pushdown alphabet is implicitly given. This is because in formal verification,
 one usually considers a finite set
  of atomic propositions which represent predicates over the states of the given system. Moreover, for verifying recursive programs,
one fixes three additional propositions, here denoted by $\call$, $\ret$, and $\intA$: $\call$ denotes the invocation of a procedure, $\ret$ denotes the return from a procedure,
and $\intA$ denotes internal actions of the current procedure. Thus, we fix a finite set $\Prop$ of atomic propositions containing the special propositions
$\call$, $\ret$, and $\intA$. The set $\Prop$ induces a pushdown alphabet $\Sigma_\Prop=\Scall\cup\Sret\cup\Sint$, where $\Scall =\{P\subseteq \Prop\mid P\cap \{\call,\ret,\intA\} = \{\call\} \}$, $\Sret =\{P\subseteq \Prop\mid P\cap \{\call,\ret,\intA\} = \{\ret\} \}$, and
$\Sint =\{P\subseteq \Prop\mid P\cap \{\call,\ret,\intA\} = \{\intA\} \}$.

 The set $C_{\Prop}$ of event clocks associated
with $\Prop$ is given by
$C_{\Prop}:= \bigcup_{p\in\Prop} \{x^{\Global}_p,y_p^{\Global},x_p^{\abs},y_p^{\abs},x_p^{\caller}\}$. Thus,  we associate with each
proposition $p\in \Prop$,  five event clocks:
the \emph{global recorder clock $x^{\Global}_p$} (resp., the \emph{global predictor clock $y^{\Global}_p$}) recording the time elapsed since the last occurrence of $p$ if any
(resp., the time required to the next occurrence of $p$ if any);  the \emph{abstract recorder clock $x_p^{\abs}$} (resp., the \emph{abstract predictor clock $y_p^{\abs}$}) recording the time elapsed since the last occurrence of $p$ if any (resp., the time required to the next occurrence of $p$) along the %in the %abstract path including \
\MAP\ visiting the current position; and the \emph{caller (recorder) clock} $x_p^{\caller}$ recording the time elapsed since the last occurrence of $p$ if any along the caller path from the current position.
Let $w=(\sigma,\tau)$ be a timed word over $\Sigma_\Prop$ and $0\leq i< |w|$. We denote by  $\Pos(\abs,\sigma,i)$ the set of positions visited by the \MAP\ of $\sigma$ associated with position $i$, and
by  $\Pos(\caller,\sigma,i)$ the set of positions visited by the caller path of $\sigma$ from position $i$. For having %In order to allow
a uniform notation, %we write
let $\Pos(\Global,\sigma,i)$ %to mean
be the full set  of $w$-positions.
The values of the clocks at a %fixed
position $i$ of the word $w$ can be deterministically determined as follows.

\begin{definition}[Determinisitic clock valuations] A \emph{clock valuation} over $C_{\Prop}$ is a mapping $\val: C_{\Prop} \mapsto \RealP\cup \{\NULL\}$, assigning to each event clock a value in
$\RealP\cup \{\NULL\}$ ($\NULL$ %denotes
is the \emph{undefined} value).
For a  timed word $w=(\sigma,\tau)$ over
 $\Sigma$ and %a position
 $0\leq i <|w|$, the \emph{clock valuation $\val^{w}_i$ over $C_{\Prop}$}, specifying the values of the event clocks at position $i$ along $w$, is defined as follows for each $p\in\Prop$, where  $\dir\in \{\Global,\abs\}$ and $\dir' \in \{\Global,\abs, \caller\}$:
\[
\begin{array}{l}
 \val^{w}_i(x_p^{\dir'})  =  \left\{
\begin{array}{ll}
\tau_i - \tau_j
&    \text{ if there exists the unique } j<i:\, p\in \sigma_j,\, j\in\Pos(\dir',\sigma,i),  \text{ and}\,
\\
& \,\,\,\,\,\,\forall k: (j < k < i   \text{ and }\, k\in\Pos(\dir',\sigma,i))  \Rightarrow p\notin \sigma_k\\
\NULL
&    \text{ otherwise }
\end{array}
\right.
\vspace{0.2cm}\\
\val^{w}_i(y_p^{\dir})  =  \left\{
\begin{array}{ll}
\tau_j - \tau_i
&    \text{ if there exists the unique }   j>i:\, p\in \sigma_j,\, j\in\Pos(\dir,\sigma,i),  \text{ and}\,
\\
& \,\,\,\,\,\,\forall k: (i < k < j   \text{ and }\, k\in\Pos(\dir,\sigma,i))  \Rightarrow p\notin \sigma_k\\
\NULL
&    \text{ otherwise }
\end{array}
\right.
\end{array}
\]
\end{definition}

 It is worth noting that while the values of the global clocks are obtained by considering the full set of positions in $w$, the values of the abstract clocks (resp., caller clocks) are defined with respect to the \MAP\ visiting the current position (resp., with respect to the caller path from the current position).

 A \emph{clock constraint} over $C_{\Prop}$ is a conjunction of atomic formulas of the form
$z \in I$, where $z\in C_{\Prop}$, and
$I$ is either an interval in $\RealP$ with bounds in $\Nat\cup\{\infty\}$, or the singleton $\{\NULL\}$.
For a clock valuation $\val$ and a clock constraint $\theta$, $\val$ satisfies $\theta$, written
$\val\models \theta$, if for each conjunct $z\in I$ of $\theta$, $\val(z)\in I$. We denote by $\Phi(C_{\Prop})$ the set of clock constraints over $C_{\Prop}$.

\begin{definition} An   \ECNA\   over  $\Sigma_\Prop= \Scall\cup \Sint \cup \Sret$ is a tuple
$\Au=\tpl{\Sigma_\Prop, Q,Q_{0},  C_{\Prop},\Gamma\cup\{\bot\},\Delta,F}$, where $Q$ is a finite
set of (control) states, $Q_{0}\subseteq Q$ is a set of initial
states,  $\Gamma\cup\{\bot\}$ is a finite stack alphabet,  $\bot\notin\Gamma$ is the special \emph{stack bottom
  symbol}, $F\subseteq Q$ is a set of accepting states, and $\Delta=\Delta_c\cup \Delta_r\cup \Delta_i$ is a transition relation, where:
  \begin{compactitem}
    \item $\Delta_c\subseteq Q\times \Scall \times \Phi(C_{\Prop})  \times Q \times \Gamma$ is the set of \emph{push transitions},
    \item $\Delta_r\subseteq Q\times \Sret  \times \Phi(C_{\Prop})  \times (\Gamma\cup \{\bot\}) \times Q $ is the set of \emph{pop transitions},
       \item $\Delta_i\subseteq Q\times \Sint\times \Phi(C_{\Prop})  \times Q $ is the set of \emph{internal transitions}.
  \end{compactitem}
\end{definition}

 We  now describe how an \ECNA\ $\Au$ behaves over a  timed word $w$. Assume that on reading the $i$-th position of $w$, the current state of $\Au$ is $q$, and  $\val^{w}_i$ is the event-clock valuation associated with  $w$ and position
 $i$. If $\Au$ reads a call $c\in \Scall$,  it chooses a push transition of the
form $(q,c,\theta, q',\gamma)\in\Delta_c$ and pushes the symbol $\gamma\neq \bot$ onto the
stack. If $\Au$ reads a return $r\in\Sret$,  it chooses a pop transition of the
form $(q,r,\theta, \gamma,q')\in\Delta_r$ such that $\gamma$ is the symbol on the top of the stack, and
pops $\gamma$ from the stack (if $\gamma=\bot$, then $\gamma$ is read but not removed). Finally, on reading an internal action $a\in\Sint$, $\Au$ chooses an internal transition of the
form $(q,a,\theta, q')\in\Delta_i$, and, in this case, there is no operation on the stack. Moreover, in all the cases, the constraint $\theta$
of the chosen transition must be fulfilled  by the  valuation $ \val^{w}_i$ and the control changes from $q$ to $q'$.

Formally, a configuration of $\Au$ is a pair $(q,\beta)$, where $q\in Q$ and
$\beta\in\Gamma^*\cdot\{\bot\}$ is a stack content.
A run $\pi$ of $\Au$ over a timed word $w=(\sigma,\tau)$
is a sequence  of configurations
 $\pi=(q_0,\beta_0),(q_1,\beta_1),\ldots
$ of length $|w|+1$ ($\infty+1$ stands for $\infty$) such that $q_0\in Q_{0}$,
$\beta_0=\bot$  (initialization), % requirement),
and the following holds
for all $0\leq i< |w|$:

\begin{description}
\item [Push] If $\sigma_i \in \Scall$, then for some $(q_i,\sigma_i,\theta,q_{i+1},\gamma)\in\Delta_c$,
  $\beta_{i+1}=\gamma\cdot \beta_i$ and $ \val^{w}_i \models \theta$.
\item [Pop] If $\sigma_i \in \Sret$, then for some
  $(q_i,\sigma_i,\theta, \gamma,q_{i+1})\in\Delta_r$,   $ \val^{w}_i \models \theta$,  and \emph{either}
  $\gamma\neq\bot$ and $\beta_{i}=\gamma\cdot \beta_{i+1}$, \emph{or}
  $\gamma=\beta_{i}= \beta_{i+1}=\bot$.
\item [Internal] If $\sigma_i \in \Sint$, then for some
  $(q_i,\sigma_i,\theta,q_{i+1})\in\Delta_i$, $\beta_{i+1}=\beta_i$ and $ \val^{w}_i\models \theta$.
\end{description}

The run $\pi$ is \emph{accepting} if \emph{either} $\pi$ is finite and $q_{|w|}\in F$, \emph{or} $\pi$ is infinite and there are infinitely many positions $i\geq 0$ such that $q_i\in F$.
The \emph{timed language} $\TLang(\Au)$ (resp., \emph{$\omega$-timed language} $\TLangInf(\Au)$) of $\Au$ is the set of finite (resp., infinite) timed words $w$ over $\Sigma_\Prop$
 such that there is an accepting run of $\Au$ on $w$. When considered as an acceptor of infinite timed words, an \ECNA\ is  called B\"{u}chi \ECNA.
 In this case, for technical convenience, we also consider  \ECNA\ equipped with a \emph{generalized B\"{u}chi acceptance condition} $\mathcal{F}$ consisting of a family of sets of accepting states. In such a setting, an infinite run $\pi$ is accepting if for each B\"{u}chi component $F\in \mathcal{F}$, the run $\pi$ visits infinitely often states in $F$.

In the following, we also consider the class of \emph{Visibly Pushdown Timed Automata} (\VPTA)~\cite{BouajjaniER94,EmmiM06},  a combination of \VPA\ and standard Timed Automata~\cite{AlurD94}.
The clocks in a \VPTA\  can be reset when a transition is taken; hence, their values %valuation of a clock
at a position of an input word depend  in general on the behaviour of the automaton and not only, as for event clocks, on the word. The syntax and semantics of \VPTA\ is shortly recalled in Appendix A of ~\cite{compGandalf18}.

\section{The Event-Clock Nested Temporal Logic}

A known decidable timed temporal logical framework related to the class of Event-Clock automata (\ECA) is the so called \emph{Event-Clock Temporal Logic} (\ECTL)~\cite{RaskinS99}, an extension
of standard \LTL\ with past obtained by means of two indexed modal operators $\PrevClock$  and $\NextClock$  which express real-time constraints.
On the other hand, for the class of \VPA, a related logical framework is the temporal logic %of nested calls and returns
\CARET\ \cite{AlurEM04}, a well-known context-free extension
of \LTL\ with past by means of non-regular versions of the \LTL\ temporal operators.
In this section, we introduce  an extension of
both \ECTL\ and \CARET, called \emph{Event-Clock Nested Temporal Logic} (\ECNTL) which allows to specify non-regular context-free real-time properties.

For the given set $\Prop$ of atomic propositions containing the special propositions $\call$, $\ret$, and $\intA$, the syntax of
\ECNTL\ formulas $\varphi$ is as follows:
\[
\varphi:= \true \DefORmini
p \DefORmini
\varphi \vee \varphi \DefORmini
\neg\,\varphi      \DefORmini
\Next^{\dir} \varphi     \DefORmini
\Prev^{\dir'} \varphi       \DefORmini
\varphi\,\Until^{\dir}\varphi  \DefORmini
\varphi\,\Since^{\dir'}\varphi \DefORmini
\NextClock^{\dir}_I \varphi \DefORmini
\PrevClock^{\dir'}_I \varphi
\]
where $p\in \Prop$, $I$ is an interval in $\RealP$ with bounds in $\Nat\cup\{\infty\}$, $\dir\in\{\Global,\abs\}$, and $\dir'\in \{\Global,\abs,\caller\}$. The operators
$\Next^{\Global}$,
$\Prev^{\Global}$, $\Until^{\Global}$, and $\Since^{\Global}$ are the standard `next',
`previous', `until', and `since' \LTL\ modalities, respectively,
$\Next^{\abs}$, $\Prev^{\abs}$, $\Until^{\abs}$, and
$\Since^{\abs}$  are their
non-regular abstract versions, and $\Prev^{\caller}$ and $\Since^{\caller}$  are the non-regular caller versions of the  `previous' and `since' \LTL\ modalities.
Intuitively, the abstract and caller
 modalities allow  to specify \LTL\ requirements on the abstract and caller paths of the given timed word over $\Sigma_\Prop$.
Real-time constraints are specified by the indexed operators $\NextClock_I^{\Global} $, $\PrevClock_I^{\Global} $, $\NextClock^{\abs}_I $,  $\PrevClock^{\abs}_I $, and $\PrevClock^{\caller}_I $.
The formula $\NextClock_I^{\Global} \varphi$  requires that the delay $t$ before the next position where $\varphi$ holds satisfies $t\in I$; symmetrically,
$\PrevClock_I^{\Global} \varphi$  constraints the previous position where $\varphi$ holds.  The abstract versions
  $\NextClock^{\abs}_I \varphi$ and $\PrevClock^{\abs}_I \varphi$ are similar, but the notions of
  next and previous position where $\varphi$ holds refer to the \MAP\ visiting the current position. Analogously,
  for the caller version $\PrevClock^{\caller}_I \varphi$ of $\PrevClock_I^{\Global} \varphi$, the notion  of
    previous position where $\varphi$ holds refers to the caller path visiting the current position.

 Full \CARET\ \cite{AlurEM04} corresponds to the fragment of \ECNTL\ obtained by disallowing the real-time operators, while the logic  \ECTL\  \cite{RaskinS99} is obtained
 from \ECNTL\   by disallowing the abstract and caller modalities.
   As pointed out in~\cite{RaskinS99}, the real-time operators $\PrevClock$ and $\NextClock$ generalize the semantics of event clock variables since they allows recursion, i.e., they can constraint
   arbitrary formulas and not only atomic propositions. Accordingly, the \emph{non-recursive fragment} of \ECNTL\ is obtained by replacing the clauses $\NextClock^{\dir}_I \varphi$ and
$\PrevClock^{\dir'}_I \varphi$ in the syntax with the clauses $\NextClock^{\dir}_I p$ and
$\PrevClock^{\dir'}_I p$, where $p\in \Prop$.
We use standard shortcuts in \ECNTL: the formula $\Eventually^{\Global} \psi$ stands for $\true\,\Until^{\Global}\, \psi$
(the  \LTL\ eventually operator), and $\Always^{\Global} \psi$ stands for $\neg \Eventually^{\Global} \neg\psi$ (the  \LTL\ always
operator). For an \ECNTL\ formula $\varphi$, $|\varphi|$ denotes the number of distinct subformulas of $\varphi$ and $\Const_\varphi$ the set of constants used as finite endpoints
in the intervals associates with the real-time modalities. The size of $\varphi$ is $|\varphi| + k$, where $k$ is the size of the binary encoding of the largest constant in
$\Const_\varphi$.

Given an \ECNTL\ formula $\varphi$, a timed word $w=(\sigma,\tau)$ over $\Sigma_\Prop$  and a position $0\leq i< |w|$, the satisfaction relation
$(w,i)\models\varphi$ is inductively defined as follows (we omit the clauses for the atomic propositions and Boolean connectives which are standard):
\[ \begin{array}{ll}
  (w,i)\models \Next^{\dir}\varphi &
              \Leftrightarrow\,  \text{ there is } j>i \text{ such that } j= \SUCC(\dir,\sigma,i) \text{ and } (w,j)\models \varphi
    \\
 (w,i)\models \Prev^{\dir'}\varphi &
              \Leftrightarrow\,  \text{ there is } j<i \text{ such that } (w,j)\models \varphi \text{ and \emph{either} } (dir'\neq \caller  \text{ and }
  \\ & \phantom{\Leftrightarrow}\,\,  i= \SUCC(\dir',\sigma,j)), \text{ or } (dir'= \caller  \text{ and } j= \SUCC(\caller,\sigma,i))
\\
  (w,i)\models \varphi_1  \Until^{\dir}\varphi_2 &
              \Leftrightarrow\,  \textrm{there is   }  j\geq  i \text{ such that }j\in \Pos(\dir,\sigma,i),\, (w,j)\models \varphi_2 \text{ and }
       \\ & \phantom{\Leftrightarrow}\,\,     (w,k)\models \varphi_1 \text{ for all } k\in [i,j-1]\cap \Pos(\dir,\sigma,i)  \\
    (w,i)\models \varphi_1  \Since^{\dir'}\varphi_2 &
                      \Leftrightarrow\,  \textrm{there is   }  j\leq  i \text{ such that }j\in \Pos(\dir',\sigma,i),\, (w,j)\models \varphi_2 \text{ and }
       \\ & \phantom{\Leftrightarrow}\,\,     (w,k)\models \varphi_1 \text{ for all } k\in [j+1,i]\cap \Pos(\dir',\sigma,i)  \\
\end{array}
\]
\[ \begin{array}{ll}
   (w,i)\models \NextClock_I^{\dir}\varphi &
              \Leftrightarrow\,  \textrm{there is   }  j>  i \text{ s.t. }j\in \Pos(\dir,\sigma,i),\, (w,j)\models \varphi,\,\tau_j-\tau_i\in I,
       \\ & \phantom{\Leftrightarrow}\,\,  \text{ and }     (w,k)\not\models \varphi  \text{ for all } k\in [i+1,j-1]\cap \Pos(\dir,\sigma,i)  \\
   (w,i)\models \PrevClock_I^{\dir'}\varphi &
                 \Leftrightarrow\,  \textrm{there is   }  j<  i \text{ s.t. }j\in \Pos(\dir',\sigma,i),\, (w,j)\models \varphi,\,\tau_i-\tau_j\in I,
       \\ & \phantom{\Leftrightarrow}\,\,  \text{ and }     (w,k)\not\models \varphi  \text{ for all } k\in [j+1,i-1]\cap \Pos(\dir',\sigma,i)
\end{array}
\]
A timed word $w$ satisfies a formula $\varphi$ (we also say that $w$ is a model of $\varphi$) if $(w,0)\models \varphi$. The timed language $\TLang(\varphi)$ (resp. $\omega$-timed language $\TLangInf(\varphi)$) of $\varphi$ is the set of finite (resp., infinite) timed words over $\Sigma_\Prop$ satisfying $\varphi$. We consider the following decision problems:
\begin{compactitem}
  \item \emph{Satisfiability:} has a given \ECNTL\ formula a finite (resp., infinite) model?
  \item \emph{Visibly model-checking:} given a \VPTA\ $\Au$ over $\Sigma_\Prop$ %(that is a  \VPTA\ where all the states are accepting)
  and an \ECNTL\ formula $\varphi$ over $\Prop$, does $\TLang(\Au)\subseteq \TLang(\varphi)$ (resp., $\TLangInf(\Au)\subseteq \TLangInf(\varphi)$) hold?
\end{compactitem}\vspace{0.1cm}

%An interesting aspect of \ECNTL\ is that it allows to express in a natural
%way real-time \LTL-like properties over  the non-regular patterns corresponding to the local computations of procedures which skip over
%nested procedure invocations.

The logic \ECNTL\  allows to express in a natural way real-time \LTL-like properties over the non-regular patterns capturing the local computations of procedures or
the stack contents at given positions. Here, we consider three relevant examples.
\begin{compactitem}
  \item \emph{Real-time total correctness:} a bounded-time total correctness requirement for a procedure $A$ specifies that if the pre-condition $p$ holds when the procedure $A$
  is invoked, then the procedure must return within $k$ time units and $q$ must hold upon return. Such a requirement can be expressed by the following non-recursive formula, where proposition $p_A$
  characterizes calls to procedure $A$: $ \Always^{\Global}\bigl((\call\wedge p\wedge p_A) \rightarrow (\Next^{\abs}q \wedge \NextClock_{[0,k]}^{\abs} \ret)\bigr)$

  \item \emph{Local bounded-time response properties:} the requirement that in the local computation (abstract path) of a procedure $A$, every request $p$ is followed
  by a response $q$ within $k$ time units can be expressed by the following non-recursive formula, where $c_A$ denotes that the control is inside procedure $A$:
    $\Always^{\Global}\bigl((  p\wedge c_A) \rightarrow   \NextClock_{[0,k]}^{\abs} q \bigr)$
  \item \emph{Real-time properties  over the stack content:}
the real-time security requirement
that a procedure $A$ is invoked only if procedure $B$ belongs to the call
stack and within $k$ time units since the activation of $B$ can be expressed as follows (the calls to procedure $A$ and $B$ are marked by proposition $p_A$ and $p_B$, respectively):
%
%  \[
$  \Always^{\Global}\bigl(( \call  \wedge p_A) \rightarrow   \PrevClock_{[0,k]}^{\caller} \,p_B \bigr)$
 % \]
   %
\end{compactitem}\vspace{0.2cm}

\noindent{\textbf{Expressiveness results.}} We now compare the expressive power of the formalisms \ECNTL, \ECNA, and \VPTA\ with  respect to the associated classes of ($\omega$-)timed languages.
It is known that \ECA\ and the logic  \ECTL\  are expressively incomparable~\cite{RaskinS99}. This result  trivially generalizes to  \ECNA\ and \ECNTL\
(note that over timed words consisting only of internal actions, \ECNA\ correspond to \ECA, and the logic \ECNTL\ corresponds to \ECTL). In~\cite{BMP18}, it is shown that
\ECNA\ are strictly less expressive than \VPTA. In Section~\ref{sec:DecisionProceduresECNTL}, we show that \ECNTL\ is subsumed by  \VPTA\ (in particular, every \ECNTL\ formula can be translated into an equivalent
\VPTA). The inclusion is strict since the logic \ECNTL\ is closed under complementation,  while   \VPTA\ are not~\cite{EmmiM06}. Hence, we obtain the following result.

 \begin{theorem} Over finite (resp., infinite) timed words, \ECNTL\ and  \ECNA\ are expressively incomparable, and   \ECNTL\ is strictly less expressive than \VPTA.
 \end{theorem}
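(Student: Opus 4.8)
The plan is to prove the two assertions separately, in each case reducing to known separation results for the ``flat'' fragments of the formalisms. For the incomparability of \ECNTL\ and \ECNA, I would restrict attention to timed words over $\Sigma_\Prop$ whose untimed projection is a word over $\Sint$ (every position internal). On such words an \ECNA\ never touches its stack, so push and pop transitions are never enabled and the automaton behaves exactly as an \ECA\ over the finite alphabet $\Sint$; dually, when every position is internal we have $\SUCC(\abs,\sigma,i)=\SUCC(\Global,\sigma,i)=i+1$ and $\SUCC(\caller,\sigma,i)=\NULL$ for all $i$, so the abstract modalities of \ECNTL\ collapse to their global counterparts, the caller modalities become vacuous, and \ECNTL\ coincides with \ECTL\ over $\Sint$ --- exactly the correspondences already noted in the text. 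Since \cite{RaskinS99} exhibits, over a fixed finite alphabet (which we may realise inside $\Sint$), a timed language $L_1$ that is \ECA- but not \ECTL-definable and a timed language $L_2$ that is \ECTL- but not \ECA-definable, I would then conclude: $L_1$ is \ECNA-definable but not \ECNTL-definable (an \ECNTL\ formula for $L_1$ would, restricted to internal-only words, be an \ECTL\ formula for $L_1$), and symmetrically $L_2$ is \ECNTL-definable but not \ECNA-definable. The same reasoning applies verbatim to $\omega$-timed words, giving incomparability in both settings.

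For the strict inclusion in \VPTA, the inclusion direction --- every \ECNTL\ formula is equivalent to a \VPTA, and to a B\"uchi \VPTA\ in the infinite case --- is exactly the translation carried out in Section~\ref{sec:DecisionProceduresECNTL}, so I would simply invoke it. For strictness I would use a closure-under-complement argument: by the definition of the satisfaction relation, $\TLang(\neg\varphi)$ is the set-theoretic complement of $\TLang(\varphi)$ within the finite timed words over $\Sigma_\Prop$ (and similarly $\TLangInf(\neg\varphi)$ in the infinite case), so the class of \ECNTL-definable timed languages is closed under complementation; on the other hand, the class of \VPTA-recognizable (resp.\ $\omega$-\VPTA-recognizable) timed languages is not closed under complementation~\cite{EmmiM06}. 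Were the two classes equal, \VPTA-recognizable languages would inherit closure under complement, a contradiction; hence some \VPTA\ language fails to be \ECNTL-definable and the inclusion is strict.

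I expect the only genuinely substantial ingredient to be the \ECNTL-to-\VPTA\ translation underlying the inclusion; but this is already developed in Section~\ref{sec:DecisionProceduresECNTL} as the engine of the satisfiability and model-checking procedures, so it can be assumed here. The remaining work is bookkeeping: checking carefully that the \ECA/\ECTL\ separating languages of \cite{RaskinS99} transfer to \ECNA/\ECNTL\ through the internal-only restriction, and that negation in \ECNTL\ yields set-theoretic complement in both the finite and the infinite semantics.
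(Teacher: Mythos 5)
Your proposal is correct and follows essentially the same route as the paper: incomparability is obtained by restricting to internal-only timed words, where \ECNA\ collapses to \ECA\ and \ECNTL\ to \ECTL, and then importing the known \ECA/\ECTL\ separations of \cite{RaskinS99}; strict inclusion in \VPTA\ is obtained from the translation of Section~\ref{sec:DecisionProceduresECNTL} together with the complementation-closure argument against \cite{EmmiM06}. No gaps.
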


We additionally investigate the expressiveness of the novel timed temporal modalities $\PrevClock^{\abs}_I$, $\NextClock^{\abs}_I$, and $\PrevClock^{\caller}_I$. It turns out  that these
modalities add expressive power. %Formally, the following holds.

 \begin{theorem}\label{theo:expressofNovelModalities} Let $\Fragm$ be the fragment of \ECNTL\ obtained by disallowing the
   modalities $\PrevClock^{\abs}_I$, $\PrevClock^{\caller}_I$, and $\NextClock^{\abs}_I$. Then, $\Fragm$ is strictly less expressive than \ECNTL.
 \end{theorem}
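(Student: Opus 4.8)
The plan is to establish strict inclusion by exhibiting a concrete timed language that is definable in $\ECNTL$ via one of the novel modalities but is not definable in the fragment $\Fragm$. Since $\Fragm$ still contains the global timed modalities $\NextClock^{\Global}_I$ and $\PrevClock^{\Global}_I$, the full abstract/caller temporal operators $\Next^{\abs}$, $\Until^{\abs}$, $\Since^{\abs}$, $\Prev^{\abs}$, $\Prev^{\caller}$, $\Since^{\caller}$, and all the (untimed) $\CARET$ machinery, the witness must genuinely exploit the combination of timing with the stack-structured path, in a way that no global timed constraint can recover. A natural candidate is a language over $\Sigma_\Prop$ in which a call $c$ carrying a marker $p$ must have its matching return $r$ occur within, say, one time unit along the $\MAP$ — i.e.\ $\Always^{\Global}\bigl((\call\wedge p)\rightarrow \NextClock^{\abs}_{[0,1]}\ret\bigr)$ — while the global time distance between $c$ and $r$ is left completely unconstrained because of arbitrarily long (in real time) nested subcomputations between them. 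Intuitively, $\NextClock^{\abs}_{[0,1]}\ret$ measures $\tau_{j}-\tau_{i}$ where $j$ is the matching return and $i$ the call, and no global modality can single out that pair $i,j$ from the position $i$.

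The key steps, in order, are: (i) fix the witness formula $\varphi_0 := \Always^{\Global}\bigl((\call\wedge p)\rightarrow \NextClock^{\abs}_{[0,1]}\ret\bigr)$ (or the analogous $\PrevClock^{\abs}$ / $\PrevClock^{\caller}$ variant, chosen so that the single argument of the novel modality is an atomic proposition, which makes the non-recursive behaviour cleanest), and note $\TLang(\varphi_0)\in\ECNTL$ trivially; (ii) build two families of timed words $\{u_n\}$ and $\{v_n\}$ over $\Sigma_\Prop$ that agree on their untimed projection and on all global timing data visible to any bounded-size $\Fragm$ formula, but such that $u_n\models\varphi_0$ and $v_n\not\models\varphi_0$ — concretely, $u_n$ has a $p$-marked call at time $0$ whose matching return is at time $1$ but with $n$ internal symbols crammed into $(0,1)$, whereas $v_n$ is identical except the matching return slips to time $1+\epsilon$; (iii) argue, by an Ehrenfeucht–Fra\"iss\'e / bisimulation-style argument tailored to $\Fragm$, that for every fixed $\Fragm$ formula $\psi$ there is $n$ (depending on $|\psi|$ and $\Const_\psi$) with $u_n\models\psi\iff v_n\models\psi$, hence $\TLang(\psi)\neq\TLang(\varphi_0)$. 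The heart of step (iii) is that the only path along which $\Fragm$ can ``look'' at the relevant return position starting from the call is the global path, and along the global path the offending return is buried behind arbitrarily many positions and arbitrarily separated from the call in time, so a constraint with constants bounded by $\Const_\psi$ cannot distinguish the two scenarios; the abstract/caller untimed operators of $\Fragm$ can still reach the return, but they carry no timing information, and the global timed operators carry timing but cannot reach precisely that return.

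Alternatively, instead of a hand-crafted pumping argument one can route the lower bound through the automata-theoretic machinery already available: by the translation discussed in Section~\ref{sec:DecisionProceduresECNTL} (used for the decidability results), $\Fragm$ formulas also translate into $\ECNA$/$\VPTA$, and one may try to show that $\TLang(\varphi_0)$ separates from every $\Fragm$-definable language using closure and a counting/pumping lemma for $\ECNA$ on the heights of nesting versus elapsed time. I would, however, keep the direct EF-argument as the primary route since it is self-contained and does not need a normal form for $\Fragm$.

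The main obstacle I anticipate is step (iii): making precise the sense in which $\Fragm$ ``cannot see'' the matched return with timing attached, given that $\Fragm$ still has the abstract/caller \emph{untimed} modalities (so it \emph{can} navigate to the return) and the global \emph{timed} modalities (so it \emph{can} read timestamps) — one must show these two capabilities cannot be usefully composed. Concretely, the delicate point is designing $u_n$ and $v_n$ so that their global timed-temporal types up to the relevant quantifier depth and constant bound coincide; this forces careful bookkeeping of where, along the global sequence, each subformula's truth value can flip, and a clean inductive invariant (e.g.\ a family of position-pairings preserving, for each relevant interval $I$ with endpoints $\le\max\Const_\psi$, the truth of $z\in I$ for the global recorder/predictor clocks of every proposition) that is respected by all operators of $\Fragm$ but violated by $\NextClock^{\abs}_{[0,1]}$.
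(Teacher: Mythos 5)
Your overall strategy --- exhibit a timed language that ties a call to its \emph{matching} return by a time constraint, then show that no $\Fragm$ formula separates suitably chosen pairs of timed words --- is the same as the paper's, which uses the language of words of the form $\{\call\}^{n}\cdot\{\ret\}^{n}$ in which \emph{some} call is at time distance exactly $1$ from its matching return. However, the concrete witness families in your step~(ii) do not work. If $u_n$ consists of a $p$-marked call at time $0$, then $n$ \emph{internal} symbols, then the matching return at time $1$ (at time $1+\epsilon$ in $v_n$), then that return is also the globally next $\ret$-position after the call, so the $\Fragm$ formula $\Always^{\Global}\bigl((\call\wedge p)\rightarrow\NextClock^{\Global}_{[0,1]}\ret\bigr)$ already separates $u_n$ from $v_n$ for every $n$: with nothing but internals in between, the global predictor clock sees exactly what the abstract one sees. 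To have any chance of defeating the global timed modalities you must interpose \emph{nested call--return pairs} between the marked call and its return, so that intermediate occurrences of $\ret$ ``distract'' the global predictor clock; your prose mentions nested subcomputations, but your actual families drop them.

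Even with nested pairs the witness is still too weak if only one call is $p$-marked: the matching return of that call is then the unique position satisfying the $\Fragm$-formula $\Prev^{\abs}(\call\wedge p)$ (the fragment keeps all untimed abstract modalities, and $\Prev^{\abs}(\call\wedge p)$ holds exactly at matching returns of $p$-marked calls), and since the fragment also keeps \emph{recursive} global timed operators, $\NextClock^{\Global}_{[0,1]}\Prev^{\abs}(\call\wedge p)$ pins down precisely the position you claimed was unreachable, composing the two capabilities you asserted ``cannot be usefully composed.'' A correct witness therefore needs unboundedly many nested calls that are all candidates and pairwise indistinguishable by formulas of bounded size and bounded constants --- which is exactly what the paper's $\{\call\}^{n}\cdot\{\ret\}^{n}$ with the existential distance-$1$ condition is engineered to provide. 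Your EF-style plan in step~(iii) is the right kind of argument, but it can only go through after the families (and probably the witness language itself) are redesigned along these lines.
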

 \begin{proof} We focus on the case of finite timed words (the case of infinite timed words is similar). Let $\Prop =\{\call,\ret\}$ and $\TLang$ be the timed language
 consisting of the finite  timed words  of the form $(\sigma,\tau)$ such that $\sigma$ is a well-matched word of the form $ \{\call\}^{n} \cdot   \{\ret\}^{n}$ for some $n>0$, and there is a call
  position $i_c$ of $\sigma$ such that $\tau_{i_r}-\tau_{i_c} = 1$, where $i_r$ is the matching-return of $i_c$ in $\sigma$.
  $\TLang$ can be easily expressed in \ECNTL. On the other hand, one can show that $\TLang$ is not definable in $\Fragm$ (the detailed proof can be found in Appendix B of \cite{compGandalf18}) .
 \end{proof}

\subsection{Decision procedures for the logic \ECNTL}\label{sec:DecisionProceduresECNTL}

In this section, we provide an automata-theoretic approach for solving satisfiability and  visibly model-checking   for the logic \ECNTL\ which generalizes both the
automatic-theoretic approach of   \CARET\ \cite{AlurEM04} and the one for   \ECTL\ \cite{RaskinS99}. We focus on infinite timed words (the approach for finite timed words is similar). Given an \ECNTL\ formula $\varphi$ over $\Prop$, we construct in %singly
exponential time a generalized B\"{u}chi
$\ECNA$ $\Au_\varphi$ over an extension of the pushdown alphabet $\Sigma_\Prop$ accepting suitable encodings of the infinite models of $\varphi$.

Fix an \ECNTL\ formula $\varphi$ over $\Prop$. %Essentially,
For each infinite timed word $w=(\sigma,\tau)$ over $\Sigma_\Prop$ we associate to $w$ an infinite timed word $\pi= (\sigma_e,\tau)$ over an extension of $\Sigma_\Prop$,
called \emph{fair Hintikka sequence}, where $\sigma_e = A_0 A_1\ldots$, and for all $i\geq 0$, $A_i$ is an \emph{atom} which, intuitively, describes a maximal set of subformulas of $\varphi$ which hold at position  $i$
along $w$. The notion of \emph{atom} syntactically captures the semantics of the Boolean connectives and
the local fixpoint characterization of the  variants of until (resp.,  since)
modalities %differing
 in terms of the corresponding variants of the next (resp., previous)  modalities.
Additional requirements on the timed word $\pi$, which can be easily checked by the transition function of an $\ECNA$, capture the semantics of the various next
and  previous modalities, and the semantics of the real-time operators. Finally, the global \emph{fairness} requirement, which can be easily checked  by a standard generalized
B\"{u}chi acceptance condition, captures the liveness requirements $\psi_2$ in until subformulas of the form $\psi_1\Until^{\Global} \psi_2$  (resp., $\psi_1\Until^{\abs} \psi_2$) of $\varphi$.
In particular, when an abstract until formula $\psi_1\Until^{\abs} \psi_2$ is asserted at   a position $i$ along an infinite timed word $w$ over $\Sigma_\Prop$ and the $\MAP$ $\nu$ visiting
position $i$ is infinite,
we have to ensure that the liveness requirement $\psi_2$ holds at some position $j\geq i$ of the $\MAP$ $\nu$. To this end,  we use a special proposition
$p_{\infty}$ which \emph{does not} hold  at a position $i$ of $w$ iff   position $i$ has a caller whose matching return is defined.  %The fairness requirement guarantees that for each
%abstract until subformula $\psi_1\Until^{\abs} \psi_2$ of $\varphi$, there are infinitely many positions where $p_{\infty}\wedge (\psi_1\Until^{\abs} \psi_2 \rightarrow \psi_2 )$ holds.
We now proceed with the technical details.
The closure $\Cl(\varphi)$ of $\varphi$ is the smallest set containing:
\begin{compactitem}
  \item $\true\in \Cl(\varphi)$, each proposition $p\in \Prop\cup \{p_{\infty}\}$, and  formulas $\Next^{\abs}\true$ and $\Prev^{\abs}\true$;
  \item all the subformulas of $\varphi$;
  \item the formulas $\Next^{\dir}(\psi_1\Until^{\dir}\psi_2)$ (resp., $\Prev^{\dir}(\psi_1\Since^{\dir}\psi_2)$) for all the subformulas
  $\psi_1\Until^{\dir}\psi_2$ (resp., $\psi_1\Since^{\dir}\psi_2$) of $\varphi$, where $\dir\in \{\Global,\abs\}$ (resp., $\dir\in \{\Global,\abs,\caller\}$).
  \item all the negations of the above formulas (we identify $\neg\neg\psi$ with $\psi$).
\end{compactitem}\vspace{0.1cm}

Note that $\varphi\in\Cl(\varphi)$ and $|\Cl(\varphi)|=O(|\varphi|)$. In the following, elements of $\Cl(\varphi)$ are seen as atomic propositions, and we consider the pushdown alphabet $\Sigma_{\Cl(\varphi)}$ induced by
$\Cl(\varphi)$. In particular, for a timed word $\pi$ over $\Sigma_{\Cl(\varphi)}$, we consider the clock valuation $\val^{\pi}_i$  specifying the values of the event clocks $x_\psi$, $y_\psi$,
 $x^{\abs}_\psi$, $y^{\abs}_\psi$, and $x^{\caller}_\psi$  at
position $i$ along $\pi$, where $\psi\in \Cl(\varphi)$.

\noindent An \emph{atom $A$} of $\varphi$ is a subset of $\Cl(\varphi)$ satisfying the following:
\begin{compactitem}
  \item $A$ is a maximal subset of $\Cl(\varphi)$ which is propositionally consistent, i.e.:
  \begin{compactitem}
    \item $\true\in A$ and for each $\psi\in \Cl(\varphi)$, $\psi\in A$ iff $\neg\psi\notin A$;
    \item for each $\psi_1\vee\psi_2\in \Cl(\varphi)$, $\psi_1\vee\psi_2\in A$ iff $\{\psi_1,\psi_2\}\cap A \neq \emptyset$;
    \item $A$ contains exactly one atomic proposition in $\{\call,\ret,\intA\}$.
  \end{compactitem}
    \item for all $\dir\in \{\Global,\abs\}$ and $\psi_1\Until^{\dir}\psi_2\in \Cl(\varphi)$, either $\psi_2\in A$ or $\{\psi_1,\Next^{\dir}(\psi_1\Until^{\dir}\psi_2)\}\subseteq A$.
        \item for all $\dir\in \{\Global,\abs,\caller\}$ and $\psi_1\Since^{\dir}\psi_2\in \Cl(\varphi)$, either $\psi_2\in A$ or $\{\psi_1,\Prev^{\dir}(\psi_1\Since^{\dir}\psi_2)\}\subseteq A$.
  \item if $\Next^{\abs}\true\notin A$, then for all $\Next^{\abs}\psi\in \Cl(\varphi)$, $\Next^{\abs}\psi\notin A$.
   \item if $\Prev^{\abs}\true\notin A$, then for all $\Prev^{\abs}\psi\in \Cl(\varphi)$, $\Prev^{\abs}\psi\notin A$.
\end{compactitem}\vspace{0.1cm}

We now introduce the notion of Hintikka sequence $\pi$ which corresponds to an infinite timed word over $\Sigma_{\Cl(\varphi)}$ satisfying additional constraints. These constraints  capture the semantics of the variants of next,   previous, and real-time  modalities, and (partially) the intended meaning of proposition $p_{\infty}$  along the associated timed word over $\Sigma_\Prop$ (the projection of $\pi$ over $\Sigma_\Prop\times \RealP$). For an atom $A$, let $\Caller(A)$ be the set of caller formulas $\Prev^{\caller}\psi$ in $A$.
For atoms $A$ and $A'$, we define a predicate $\NextPrev(A,A')$ which holds if the global next (resp., global previous) requirements in $A$ (resp., $A'$) are the ones that hold in $A'$ (resp., $A$), i.e.:
(i) for all $\Next^{\Global}\psi\in \Cl(\varphi)$, $\Next^{\Global}\psi\in A$ iff  $\psi\in A'$, and (ii)
 for all $\Prev^{\Global}\psi\in \Cl(\varphi)$, $\Prev^{\Global}\psi\in A'$ iff  $\psi\in A$. Similarly, the predicate $\AbsNextPrev(A,A')$ holds if:
 (i) for all $\Next^{\abs}\psi\in \Cl(\varphi)$, $\Next^{\abs}\psi\in A$ iff  $\psi\in A'$, and (ii)
 for all $\Prev^{\abs}\psi\in \Cl(\varphi)$, $\Prev^{\abs}\psi\in A'$ iff  $\psi\in A$, and additionally (iii) $\Caller(A)=\Caller(A')$.
 Note that for   $\AbsNextPrev(A,A')$ to hold we also require that the caller requirements in $A$ and $A'$ coincide
 consistently with the fact that the positions of  a \MAP\ have the same caller (if any).

\begin{definition} \label{Def:HintikkaSequence}  An infinite timed word $\pi= (\sigma,\tau)$ over $\Sigma_{\Cl(\varphi)}$, where $\sigma =A_0 A_1\ldots$, is an \emph{Hintikka sequence of $\varphi$},  if for all $i\geq 0$, $A_i$ is a $\varphi$-atom and the following holds:
 \begin{compactenum}
    \item  \emph{Initial consistency:} for all $\dir\in\{\Global,\abs,\caller\}$ and $\Prev^{\dir}\psi\in\Cl(\varphi)$, $\neg \Prev^{\dir}\psi\in A_0$.
  \item  \emph{Global next and previous requirements:} $\NextPrev(A_i,A_{i+1})$.
  \item  \emph{Abstract and caller requirements:} we distinguish three cases.
  \begin{compactitem}
    \item $\call\notin A_i$ and $\ret\notin A_{i+1}$: $\AbsNextPrev(A_i,A_{i+1})$, $(p_{\infty}\in A_i$ iff $p_{\infty}\in A_{i+1})$;
    \item $\call\notin A_i$ and $\ret\in A_{i+1}$: $\Next^{\abs}\true\notin A_i$, and ($\Prev^{\abs}\true\in A_{i+1}$  iff the matching call of the return position $i+1$ is defined).
Moreover, if  $\Prev^{\abs}\true\notin A_{i+1}$, then $p_{\infty}\in A_i\cap A_{i+1}$ and $\Caller(A_{i+1})=\emptyset$.
    \item  $\call\in A_i$: if  $\SUCC(\abs,\sigma,i)=\NULL$ then $\Next^{\abs}\true\notin A_i$ and  $p_{\infty}\in A_i$; otherwise
    $\AbsNextPrev(A_i,A_{j})$ and $(p_{\infty}\in A_i$ iff $p_{\infty}\in A_{j})$, where $j=\SUCC(\abs,\sigma,i)$. Moreover, if
   $\ret\notin A_{i+1}$, then  $\Caller(A_{i+1})=\{\Prev^{\caller}\psi\in\Cl(\varphi)\mid\psi\in A_i\}$ and
   ($\Next^{\abs}\true\in A_i$ iff $p_{\infty}\notin A_{i+1}$).
  \end{compactitem}
  \item  \emph{Real-time requirements:}
  \begin{compactitem}
    \item for all $\dir\in\{\Global,\abs,\caller\}$ and $\PrevClock^{\dir}_I \psi \in\Cl(\varphi)$, $\PrevClock^{\dir}_I \psi \in A_i$ iff $\val_i^{\pi}(x^{\dir}_{\psi})\in I$;
        \item for all $\dir\in\{\Global,\abs\}$ and $\NextClock^{\dir}_I \psi \in\Cl(\varphi)$, $\NextClock^{\dir}_I \psi \in A_i$ iff $\val_i^{\pi}(y^{\dir}_{\psi})\in I$.
  \end{compactitem}
\end{compactenum}
\end{definition}

\noindent In order to capture the liveness requirements of the global and abstract  until subformulas of $\varphi$, and fully capture the intended meaning of proposition $p_{\infty}$, we consider the following additional global
fairness constraint. An Hintikka sequence $\pi=(A_0,t_0)(A_1,t_1)$ of $\varphi$ is \emph{fair} if
\begin{inparaenum}[(i)]
    \item for infinitely many $i\geq 0$, $p_{\infty}\in A_i$;
        \item for all $\psi_1\Until^{\Global} \psi_2\in\Cl(\varphi)$, there are infinitely many $i\geq 0$  s.t.  $\{\psi_2,\neg(\psi_1\Until^{\Global} \psi_2)\}\cap A_i\neq \emptyset$; and
  \item  for all $\psi_1\Until^{\abs} \psi_2\in\Cl(\varphi)$, there are infinitely many $i\geq 0$ such that $p_{\infty}\in A_i$ and  $\{\psi_2,\neg(\psi_1\Until^{\abs} \psi_2)\}\cap A_i\neq \emptyset$.
\end{inparaenum}

The Hintikka sequence $\pi$ is \emph{initialized} if $\varphi\in A_0$. Note that according to the intended meaning of proposition $p_{\infty}$, for each infinite timed word $w=(\sigma,\tau)$ over $\Sigma_\Prop$, $p_{\infty}$ holds at infinitely many positions. Moreover,
%along $\sigma$,
there is at a most one \emph{infinite} \MAP\ $\nu$ of $\sigma$, and for such a \MAP\ $\nu$ and each position $i$ greater than the starting position of $\nu$, either $i$ belongs to $\nu$ and $p_{\infty}$ holds, or $p_{\infty}$ does not hold. Hence, the fairness requirement for an abstract until subformula $\psi_1\Until^{\abs} \psi_2$ of $\varphi$ ensures that whenever $\psi_1\Until^{\abs} \psi_2$ is asserted at some position $i$ of $\nu$, then $\psi_2$ eventually holds at some position $j\geq i$ along $\nu$. Thus, we obtain the following characterization of the infinite models of $\varphi$, where
$\Proj_\varphi$ is the mapping associating to each  \emph{fair} Hintikka sequence $\pi=(A_0,t_0)(A_1,t_1)\ldots$ of $\varphi$, the infinite timed word over $\Sigma_\Prop$ given by
$\Proj(\pi)=(A_0\cap\Prop,t_0)(A_1\cap\Prop,t_1)\ldots$.

\begin{proposition}\label{prop:TimedHintikkaSequence}
\emph{Let $\pi=(A_0,t_0)(A_1,t_1)\ldots$ be a fair Hintikka sequence of $\varphi$. Then,  for all $i\geq 0$ and $\psi\in \Cl(\varphi)\setminus \{p_{\infty},\neg p_{\infty}\}$,
$\psi\in A_i$ \emph{iff} $(\Proj_\varphi(\pi),i)\models \psi$. Moreover, the mapping $\Proj_\varphi$ is a bijection between the set of fair  Hintikka sequences of $\varphi$ and the set of infinite timed words over $\Sigma_\Prop$. In particular, an infinite timed word over $\Sigma_\Prop$ is a model of $\varphi$ iff the associated fair Hintikka sequence is initialized.}
 \end{proposition}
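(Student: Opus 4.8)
The plan is to prove the three assertions in sequence, each essentially by an induction on the structure of subformulas together with a careful bookkeeping of the Hintikka conditions. First I would establish the ``iff'' characterization: for a fair Hintikka sequence $\pi = (A_0,t_0)(A_1,t_1)\ldots$ and $w = \Proj_\varphi(\pi)$, we have $\psi \in A_i$ iff $(w,i)\models\psi$ for all $\psi \in \Cl(\varphi)\setminus\{p_\infty,\neg p_\infty\}$. The induction is on the nesting depth of $\psi$. The Boolean cases and atomic propositions in $\Prop$ are immediate from propositional consistency of atoms and the definition of $\Proj_\varphi$. For the global next/previous modalities, one uses condition 2 (\textit{Global next and previous requirements}) of Definition~\ref{Def:HintikkaSequence} together with the induction hypothesis. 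For the until/since modalities, one combines the local fixpoint unfolding built into the atom definition (e.g.\ $\psi_1\Until^\dir\psi_2 \in A$ iff $\psi_2 \in A$ or $\{\psi_1,\Next^\dir(\psi_1\Until^\dir\psi_2)\}\subseteq A$) with the fairness constraints (ii) and (iii): fairness rules out the ``spurious'' infinite postponement of the liveness requirement $\psi_2$, which is exactly what is needed to pass from the fixpoint characterization to the genuine semantics. The abstract and caller modalities require the most care: here one uses condition 3 (\textit{Abstract and caller requirements}), checking the three cases ($\call\notin A_i$ with $\ret\notin A_{i+1}$; $\call\notin A_i$ with $\ret\in A_{i+1}$; $\call\in A_i$) against the definition of $\SUCC(\abs,\sigma,\cdot)$ and $\SUCC(\caller,\sigma,\cdot)$, and one must verify that $\Caller(A_i)$ is propagated correctly along each $\MAP$ (since all positions of a $\MAP$ share the same caller). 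For the real-time modalities $\NextClock^\dir_I$ and $\PrevClock^{\dir'}_I$, condition 4 directly ties membership in $A_i$ to the deterministic clock value $\val^\pi_i$, and since the clock valuations depend only on the untimed projection structure and the timestamps, one checks that $\val^\pi_i(z^\dir_\psi)$ computed over $\Sigma_{\Cl(\varphi)}$ agrees (via the induction hypothesis, which tells us exactly at which positions $\psi$ holds) with the intended ``distance to nearest position where $\psi$ holds'' along direction $\dir$.

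Second, I would prove that $\Proj_\varphi$ is a bijection between fair Hintikka sequences of $\varphi$ and infinite timed words over $\Sigma_\Prop$. Injectivity follows from the first part: if $\Proj_\varphi(\pi) = \Proj_\varphi(\pi')$, then by the ``iff'' characterization every $A_i$ and $A'_i$ agree on all of $\Cl(\varphi)\setminus\{p_\infty,\neg p_\infty\}$, and they agree on $p_\infty$ as well because, as observed in the paragraph preceding the proposition, the truth of $p_\infty$ at a position is forced by the well-matching structure of $\sigma$ (a position $i$ satisfies $p_\infty$ iff it does not lie strictly inside a matched call), hence is determined by $\Proj_\varphi(\pi)$. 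For surjectivity, given an arbitrary infinite timed word $w = (\sigma,\tau)$ over $\Sigma_\Prop$, I would define $A_i := \{\psi \in \Cl(\varphi) : (w,i)\models\psi\} \cup (\{p_\infty\}$ if $i$ is not inside a matched call, else $\{\neg p_\infty\})$, extending the satisfaction relation of $\varphi$ to the auxiliary formulas $\Next^\abs\true$, $\Prev^\abs\true$, and their negations in the natural way, and then verify that each $A_i$ is a $\varphi$-atom and that the sequence satisfies all conditions of Definition~\ref{Def:HintikkaSequence} and the fairness constraints. The atom conditions are immediate semantic facts; conditions 1--4 are routine unwindings of the semantics of each modality against the structural successor definitions; and the fairness conditions (i)--(iii) hold because $p_\infty$ is true infinitely often (there is at most one infinite $\MAP$, every position beyond its start is either on it with $p_\infty$ true or off it with $p_\infty$ false), and an until formula that is true at $(w,i)$ has its liveness witness realized at some later position, which lands in the relevant $\Pos(\dir,\sigma,i)$ set.

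Third, the final sentence — an infinite timed word over $\Sigma_\Prop$ is a model of $\varphi$ iff its associated fair Hintikka sequence is initialized — is an immediate corollary: by definition $w\models\varphi$ means $(w,0)\models\varphi$, which by the ``iff'' characterization (applied at $i=0$, noting $\varphi\in\Cl(\varphi)$ and $\varphi\notin\{p_\infty,\neg p_\infty\}$) is equivalent to $\varphi\in A_0$, i.e.\ to the sequence being initialized; here the sequence ``associated'' to $w$ is the unique preimage $\Proj_\varphi^{-1}(w)$ guaranteed by the bijection.

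\textbf{Main obstacle.} The delicate part is the abstract-modality case of the first induction, specifically reconciling the Hintikka conditions with the semantics at call positions whose matching return exists versus does not exist, and at return positions. One has to argue that the propagation $\AbsNextPrev(A_i,A_j)$ with $j = \SUCC(\abs,\sigma,i)$ — which ``jumps over'' the matched subcomputation $w[i+1,j-1]$ — correctly implements $\Next^\abs$ and $\Prev^\abs$ along the $\MAP$, and simultaneously that the $p_\infty$ bookkeeping in case 3 (including the clause ``$\Next^\abs\true\in A_i$ iff $p_\infty\notin A_{i+1}$'' and the handling of $\Caller$) is exactly what is needed so that the fairness constraint (iii) captures abstract-until liveness along a possibly infinite $\MAP$. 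Getting these case distinctions exhaustive and consistent — in particular checking that $\Prev^\abs\true \in A_{i+1}$ at a return position iff its matching call is defined, which is what lets $\Prev^\abs$ see ``across'' the return to the caller's level — is where the real work lies; everything else is bookkeeping against standard Hintikka-style arguments.
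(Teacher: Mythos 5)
Your plan is correct and follows essentially the same route the paper takes (the detailed proof is deferred to Appendix C of the full version, but the surrounding text makes the intended argument clear): a structural induction establishing $\psi\in A_i$ iff $(\Proj_\varphi(\pi),i)\models\psi$ using the atom fixpoint conditions, the Hintikka propagation rules, and the fairness constraints for the until liveness; surjectivity by reading atoms off the semantics; and the final claim as a corollary. The one point worth making explicit when you write it out is that the injectivity step needs a small lemma showing that in \emph{any} fair Hintikka sequence the value of $p_{\infty}$ at each position is forced to coincide with its intended meaning, and that this uses not only the propagation clauses of condition~3 but also fairness constraint~(i) (e.g.\ for words with no calls, condition~3 alone only forces $p_{\infty}$ to be constant, and fairness is what rules out the all-false assignment).
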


The detailed proof of Proposition~\ref{prop:TimedHintikkaSequence} can be found in Appendix C of \cite{compGandalf18}. 

The notion of initialized fair Hintikka sequence can be easily captured by a generalized B\"{u}chi \ECNA.

\begin{theorem}\label{theo:FromECNTLtoECNA} Given an \ECNTL\ formula $\varphi$, one can construct in singly exponential time a generalized B\"{u}chi \ECNA\ $\Au_\varphi$
 having $  2^{O(|\varphi|)}$ states, $2^{O(|\varphi|)}$ stack symbols, a set of constants $\Const_\varphi$, and $O(|\varphi|)$ clocks. If $\varphi$ is non-recursive, then
 $\Au_\varphi$   accepts the infinite models of $\varphi$; otherwise, $\Au_\varphi$  accepts the set of initialized fair Hintikka sequences of $\varphi$.
 \end{theorem}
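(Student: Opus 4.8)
The goal is to build a generalized Büchi \ECNA\ $\Au_\varphi$ whose runs correspond exactly to initialized fair Hintikka sequences of $\varphi$ (or, in the non-recursive case, directly to models of $\varphi$). The natural approach is the standard automata-theoretic construction for temporal logics, adapted to the nested/stack setting in the style of \CARET\ and to the event-clock setting in the style of \ECTL. First I would take the state space of $\Au_\varphi$ to be the set of $\varphi$-atoms, so that there are $2^{O(|\varphi|)}$ states. The input alphabet is $\Sigma_{\Cl(\varphi)}$: on reading the $i$-th symbol the automaton checks that the symbol coincides with the current atom $A_i$ (this ties the accepted word to the Hintikka sequence). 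The clock constraints labelling a transition out of $A_i$ encode the Real-time requirements of Definition~\ref{Def:HintikkaSequence}, clause 4: for each $\PrevClock^{\dir}_I\psi\in\Cl(\varphi)$ we put the conjunct $x^{\dir}_\psi\in I$ if $\PrevClock^{\dir}_I\psi\in A_i$ and $x^{\dir}_\psi\in\{\NULL\}\cup(\RealP\setminus I)$ otherwise, and symmetrically for $\NextClock^{\dir}_I\psi$ using the predictor clocks $y^\dir_\psi$. Crucially, the event clocks $x^\dir_\psi,y^\dir_\psi$ are evaluated on the word over $\Sigma_{\Cl(\varphi)}$ exactly as in the definition of $\val^\pi_i$, so this is a direct syntactic translation and requires $O(|\varphi|)$ clocks.

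Next I would encode the transition relation. The global next/previous coupling $\NextPrev(A_i,A_{i+1})$ is a purely local constraint between consecutive atoms, so it is built into every transition (push, pop, or internal) regardless of stack behaviour. The abstract and caller requirements (clause 3) are where the stack is used, exactly mirroring the three cases of the definition: when $\call\in A_i$ the automaton performs a push transition that stores on the stack enough information to (a) enforce $\AbsNextPrev(A_i,A_j)$ when the matching return $j$ is reached — i.e. push the "abstract obligations" of $A_i$, namely the set $\{\psi : \Next^{\abs}\psi\in A_i\}$ together with which $\Prev^\abs$ and $\Caller$ formulas must hold at $j$ — and (b) record whether $p_\infty\in A_i$, to be checked against $A_j$; on a pop transition reading a return the automaton pops this information and verifies the $\Prev^\abs$/$\Caller$/$p_\infty$ consistency, and it also checks the "matching call defined" clauses using the stack-bottom symbol $\bot$ versus a genuine stack symbol. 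When $\call\notin A_i$, the $i+1$-th position is either an internal/call position (case: $\ret\notin A_{i+1}$, use $\AbsNextPrev(A_i,A_{i+1})$ and $p_\infty$ equality, a local check) or a return (case: $\ret\in A_{i+1}$, where the pop transition again consults the top of stack to decide whether $\Prev^\abs\true$ should hold and, if not, forces $p_\infty\in A_i\cap A_{i+1}$ and $\Caller(A_{i+1})=\emptyset$). The caller requirement $\Caller(A_{i+1})=\{\Prev^\caller\psi : \psi\in A_i\}$ immediately after a call that is not itself followed by a return is enforced on the push transition. Initial consistency (clause 1) is handled by restricting $Q_0$ to atoms containing no $\Prev^\dir\psi$; initialization ($\varphi\in A_0$) is an additional restriction on $Q_0$ used only for the "initialized" variant.

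The fairness constraints become a generalized Büchi condition $\mathcal F$ with one component per global-until subformula (states $A$ with $\{\psi_2,\neg(\psi_1\Until^\Global\psi_2)\}\cap A\neq\emptyset$), one component per abstract-until subformula (states $A$ with $p_\infty\in A$ and $\{\psi_2,\neg(\psi_1\Until^\abs\psi_2)\}\cap A\neq\emptyset$), and one component $\{A : p_\infty\in A\}$; since $|\Cl(\varphi)|=O(|\varphi|)$ there are $O(|\varphi|)$ components, and by the construction in the excerpt a run visiting each component infinitely often corresponds precisely to a fair Hintikka sequence. For the non-recursive fragment, the real-time subformulas only constrain atomic propositions $p\in\Prop$, so by Proposition~\ref{prop:TimedHintikkaSequence} membership of $\Prev^\dir_I p$ / $\NextClock^\dir_I p$ in $A_i$ already reflects the semantics on the $\Sigma_\Prop$-projection; hence projecting the alphabet down to $\Sigma_\Prop$ (existentially quantifying away the atom components, which is a standard relabelling of transitions) yields an \ECNA\ accepting exactly the infinite models of $\varphi$, and closure of \ECNA\ under this relabelling keeps the size bounds. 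Correctness is then immediate from Proposition~\ref{prop:TimedHintikkaSequence}: runs of $\Au_\varphi$ are in bijection with initialized fair Hintikka sequences, which are in bijection with models.

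**Main obstacle.** The routine parts are the Boolean/fixpoint atom bookkeeping and the clock-constraint translation; the delicate point is getting the stack discipline to faithfully realize clause 3 of Definition~\ref{Def:HintikkaSequence}, in particular synchronizing three pieces of information across a matching call/return pair — the abstract-successor obligations ($\AbsNextPrev$), the caller-formula set $\Caller$, and the $p_\infty$ flag — while simultaneously handling the degenerate sub-cases (unmatched call, return with no matching call, call immediately followed by a return) through the $\bot$ symbol, and doing all of this with only a visibly-pushdown stack (push exactly on $\call$, pop exactly on $\ret$). This is exactly the extra work beyond the \CARET\ construction, and I would treat it by having the push transition store the pair (abstract-obligation set of $A_i$, $\Caller(A_i)$, bit $[p_\infty\in A_i]$) — a single stack symbol drawn from a set of size $2^{O(|\varphi|)}$, matching the claimed stack-alphabet bound — and having the pop transition check these against the atom read at the return. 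Verifying that every case of the definition is covered, and that no spurious runs are admitted, is the bulk of the (omitted) correctness argument, but it is a finite case analysis with no conceptual surprises once the stack symbol is chosen correctly.
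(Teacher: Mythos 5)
Your construction coincides with the paper's: states are the $\varphi$-atoms with the input symbol forced to equal the current state, the real-time requirements are compiled into recorder/predictor clock constraints on each transition (the paper writes your ``$\{\NULL\}\cup(\RealP\setminus I)$'' as a nondeterministic choice among $\{\NULL\}$ and the maximal intervals disjoint from $I$, which is the same thing in the allowed constraint syntax), calls push the atom (or equivalently its abstract/caller/$p_\infty$ obligations) to be checked at the matching return, and fairness becomes a generalized B\"{u}chi condition, with projection onto $\Sigma_\Prop$ in the non-recursive case. This is essentially the paper's proof, at the same level of detail.
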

\begin{proof}
%For the given \ECNTL\ formula $\varphi$,
We first build a  generalized B\"{u}chi  \ECNA\ $\Au_\varphi$ over $\Sigma_{\Cl(\varphi)}$ accepting the set of initialized fair Hintikka sequences of $\varphi$.
 The set of $\Au_\varphi$ states is the set of atoms of $\varphi$, and a state $A_0$ is initial if $\varphi\in A_0$ and $A_0$ satisfies Property~1 (initial consistency) in Definition~\ref{Def:HintikkaSequence}.
In the transition function, we require that the input symbol coincides with the source state in such a way that in a run, the sequence of control states corresponds to the untimed part of the input.
%Moreover,
By the transition function, the automaton checks that the input word is an Hintikka sequence. In particular, for the abstract next and abstract previous requirements (Property~3 in Definition~\ref{Def:HintikkaSequence}),
whenever the input symbol $A$ is a call, the automaton pushes on the stack the atom $A$. In such a way, on reading the matching return $A_r$ (if any) of the call $A$,  the automaton pops $A$ from the stack and can locally check that $\AbsNextPrev(A,A_r)$ holds. In order to ensure the real-time requirements (Property~4 in Definition~\ref{Def:HintikkaSequence}), $\Au_\varphi$ simply uses the  recorder clocks and   predictor clocks: a transition having as source state an atom  $A$ has a clock constraint whose set of atomic constraints has the form \vspace{-0.1cm}
\[
\displaystyle{\bigcup_{\PrevClock_I^{\dir}\psi\in A}\{x_{\psi}^{\dir}\in I\}\cup\bigcup_{\neg\PrevClock_I^{\dir}\psi\in A}\{x_{\psi}^{\dir}\in  \widehat{I}\}\cup
\bigcup_{\NextClock_I^{\dir}\psi\in A}\{y_{\psi}^{\dir}\in I\}\cup\bigcup_{\neg\NextClock_I^{\dir}\psi\in A}\{y_{\psi}^{\dir}\in \widehat{I}\}}\vspace{-0.2cm}
\]
where $\widehat{I}$ is either $\{\NULL\}$ or a \emph{maximal} interval over $\RealP$ disjunct from $I$.
Finally, the generalized B\"{u}chi acceptance condition is exploited for checking that the input initialized Hintikka sequence is fair. 
The detailed construction of
$\Au_\varphi$ can be found in Appendix D of \cite{compGandalf18}. Note that $\Au_\varphi$ has $2^{O(|\varphi|)}$ states and stack symbols, a set of constants $\Const_\varphi$, and $O(|\varphi|)$ event clocks.
If $\varphi$ is non-recursive, then the effective clocks are only associated with propositions in $\Prop$. Thus, by projecting the input symbols of the transition function of $\Au_\varphi$ over $\Prop$, by Proposition~\ref{prop:TimedHintikkaSequence},
we obtain a generalized B\"{u}chi \ECNA\ accepting the infinite models of $\varphi$.
\end{proof}

We can state now the main result of the section.

\begin{theorem}\label{theorem:complexityResultsECNTL} Given an \ECNTL\ formula $\varphi$  over $\Sigma_\Prop$, one can construct in singly exponential time a \VPTA,
with $2^{O(|\varphi|^{3})}$ states and stack symbols, $O(|\varphi|)$ clocks, and a set of constants $\Const_\varphi$, which accepts $\TLang(\varphi)$ (resp., $\TLangInf(\varphi)$). Moreover,
   satisfiability and visibly  model-checking for
 \ECNTL\ over finite (resp., infinite) timed words are \EXPTIME-complete.
 \end{theorem}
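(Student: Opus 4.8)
The plan is to lift the automata-theoretic characterization of Theorem~\ref{theo:FromECNTLtoECNA} to \VPTA\ and then to reduce both decision problems to \VPTA\ emptiness.

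\emph{From \ECNTL\ to \VPTA.} Given $\varphi$, I would first apply Theorem~\ref{theo:FromECNTLtoECNA} (and its finite-word analogue) to obtain, in singly exponential time, the generalized B\"uchi \ECNA\ $\Au_\varphi$ with $2^{O(|\varphi|)}$ states and stack symbols, $O(|\varphi|)$ event clocks, and constants $\Const_\varphi$. If $\varphi$ is non-recursive, $\Au_\varphi$ already accepts $\TLang(\varphi)$ (resp.\ $\TLangInf(\varphi)$) over $\Sigma_\Prop$; otherwise $\Au_\varphi$ accepts the initialized fair Hintikka sequences of $\varphi$ over $\Sigma_{\Cl(\varphi)}$, which by Proposition~\ref{prop:TimedHintikkaSequence} are in $\Proj_\varphi$-bijection with the models of $\varphi$. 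I would then feed $\Au_\varphi$ to the effective \ECNA-to-\VPTA\ translation of~\cite{BMP18}, and, in the recursive case, project the transition labels onto $\Prop$. This projection is legitimate at the automaton level: after the translation the event clocks have become ordinary resettable clocks whose resets are triggered by the control state of $\Au_\varphi$ (which, by construction, encodes the current atom), so erasing the non-$\Prop$ components of the input symbols --- which leaves the call/return/internal partition, hence the stack discipline, untouched --- again yields a \VPTA. Propagating the parameters of the translation through this composition gives a \VPTA\ with $2^{O(|\varphi|^{3})}$ states and stack symbols, $O(|\varphi|)$ clocks, and constant set $\Const_\varphi$ accepting $\TLang(\varphi)$ (resp.\ $\TLangInf(\varphi)$), still within singly exponential time; this proves the first claim.

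\emph{Satisfiability.} By the previous step, $\varphi$ has a finite (resp.\ infinite) model iff the \VPTA\ just built is non-empty. Emptiness of \VPTA\ is decidable~\cite{BouajjaniER94,EmmiM06}, and via the region construction it can be decided in time polynomial in the number of control states and stack symbols and exponential only in the number of clocks and in the bit-size of the constants; instantiating with $2^{O(|\varphi|^{3})}$ states, $O(|\varphi|)$ clocks and $\Const_\varphi$ gives an \EXPTIME\ upper bound. For the matching lower bound I would use the fragment \CARET\ (obtained by dropping the real-time modalities), for which satisfiability is already \EXPTIME-hard~\cite{AlurEM04}: a \CARET\ formula over $\Prop$ has an untimed model iff it has a timed model, since timestamps are irrelevant and may be chosen monotone and divergent.

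\emph{Visibly model checking.} Given a \VPTA\ $\Au$ over $\Sigma_\Prop$ and $\varphi$, I would rewrite $\TLang(\Au)\subseteq\TLang(\varphi)$ (resp.\ for infinite words) as $\TLang(\Au)\cap\TLang(\neg\varphi)=\emptyset$, which is sound because $\TLang(\neg\varphi)$ is exactly the complement of $\TLang(\varphi)$ among the timed words over $\Sigma_\Prop$. Using the construction above, build a \VPTA\ $\mathcal{B}$ of size $2^{O(|\varphi|^{3})}$ for $\neg\varphi$, and take the synchronized product of $\Au$ and $\mathcal{B}$ over the common pushdown alphabet (a single stack of pairs of stack symbols, the clocks of $\Au$ together with $O(|\varphi|)$ fresh ones); \VPTA\ are effectively closed under such intersection, so the result $\mathcal{C}$ is a \VPTA\ with $|\Au|\cdot 2^{O(|\varphi|^{3})}$ states and stack symbols, whose emptiness is then decidable in \EXPTIME\ in the combined size of $\Au$ and $\varphi$ by the same region argument. \EXPTIME-hardness already holds for $\Au$ an untimed \VPA\ and $\varphi$ a \CARET\ formula, by \EXPTIME-hardness of \CARET\ model checking~\cite{AlurEM04} (view $\Au$ as a clockless \VPTA). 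Most of the genuine content is already discharged by Theorem~\ref{theo:FromECNTLtoECNA}; the main obstacle here is the complexity accounting --- namely, controlling the blow-up of the \ECNA-to-\VPTA\ translation so that the automaton stays singly exponential in $|\varphi|$, and observing that emptiness of the resulting product \VPTA\ remains in \EXPTIME\ precisely because the region-based procedure is only \emph{polynomial} in the number of control states and stack symbols, the exponential cost being confined to the clocks (kept linear) and the constants (kept within $\Const_\varphi$).
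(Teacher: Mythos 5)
Your proposal is correct and follows essentially the same route as the paper: apply Theorem~\ref{theo:FromECNTLtoECNA}, translate the resulting \ECNA\ into a \VPTA\ via~\cite{BMP18} (with the $2^{O(|\varphi|^{2}\cdot k)}$ blow-up for $k=O(|\varphi|)$ atomic constraints, giving $2^{O(|\varphi|^{3})}$ states), project onto $\Prop$ at the \VPTA\ level, and then reduce satisfiability to emptiness and model checking to emptiness of the intersection with the \VPTA\ for $\neg\varphi$, with lower bounds inherited from \CARET. Your added justifications --- why the projection is safe only after event clocks become resettable clocks, and why untimed \CARET\ hardness transfers to the timed setting --- are points the paper leaves implicit, but they do not change the argument.
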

 \begin{proof} We focus on the case of infinite timed words. Fix an \ECNTL\ formula $\varphi$  over $\Sigma_\Prop$. By Theorem~\ref{theo:FromECNTLtoECNA},
 one can construct a generalized B\"{u}chi \ECNA\ $\Au_\varphi$ over $\Sigma_{\Cl(\varphi)}$
 having $  2^{O(|\varphi|)}$ states and stack symbols, a set of constants $\Const_\varphi$, and  accepting the set of initialized fair Hintikka sequences of $\varphi$.
 By~\cite{BMP18}, one can construct a generalized  B\"{u}chi \VPTA\ $\Au'_\varphi$ over $\Sigma_{\Cl(\varphi)}$ accepting $\TLangInf(\Au_\varphi)$, having
 $2^{O(|\varphi|^{2}\cdot k)}$ states and stack symbols, $O(k)$ clocks, and a set of constants $\Const_\varphi$,  where $k$ is the number of atomic constraints used by $\Au_\varphi$.
 Note that $k=O(|\varphi|)$. Thus, by projecting the input symbols of the transition function of $\Au'_\varphi$ over $\Prop$, we obtain a (generalized B\"{u}chi) \VPTA\ satisfying the first part of
 Theorem~\ref{theorem:complexityResultsECNTL}.

 For the upper bounds of the second part of  Theorem~\ref{theorem:complexityResultsECNTL},  observe that by~\cite{BouajjaniER94,AbdullaAS12} emptiness of generalized B\"{u}chi \VPTA\  is  solvable in time $O(n^{4} \cdot 2^{O(m\cdot \log K m)})$, where
$n$ is the number of states, $m$ is the number of clocks, and $K$ is the largest constant used in the clock constraints of the automaton (hence, the time complexity is polynomial in the number of states).
Now, given a B\"{u}chi \VPTA\ $\Au$ over $\Sigma_\Prop$ %where all the states are accepting
and an \ECNTL\ formula $\varphi$ over $\Sigma_\Prop$,   model-checking
$\Au$ against $\varphi$ reduces to check emptiness of %the language
$\TLangInf(\Au)\cap  \TLangInf(\Au'_{\neg\varphi})$, where $\Au'_{\neg\varphi}$ is the generalized B\"{u}chi \VPTA\
associated with $\neg\varphi$. Thus, since B\"{u}chi \VPTA\ are polynomial-time closed under intersection, membership in \EXPTIME\ for   satisfiability and visibly  model-checking of
 \ECNTL\ follow. The matching lower bounds %directly
 follow  from \EXPTIME-completeness of satisfiability and visibly model-checking for the logic \CARET\ \cite{AlurEM04} which is subsumed by \ECNTL. 
 \end{proof} 
\section{Nested Metric Temporal Logic (\NMTL)}

Metric temporal logic (\MTL) \cite{Koymans90} is a well-known  timed linear-time temporal logic which extends \LTL\ with time
constraints on until modalities. In this section, we introduce an extension of \MTL\ with past, we call \emph{nested} \MTL\ (\NMTL, for short), by means of timed versions of the
\CARET\ modalities.

For the given set $\Prop$ of atomic propositions containing the special propositions $\call$, $\ret$, and $\intA$, the syntax of
nested \NMTL\ formulas $\varphi$ is as follows:\vspace{-0.1cm}
\[
\varphi:= \top \DefORmini
p \DefORmini
\varphi \vee \varphi \DefORmini
\neg\,\varphi      \DefORmini
\varphi\,\StrictUntil^{\dir}_I\varphi  \DefORmini
\varphi\,\StrictSince^{\dir'}_I\varphi \vspace{-0.05cm}
\]
where $p\in \Prop$, $I$ is an interval in $\RealP$ with endpoints in $\Nat\cup\{\infty\}$, $\dir\in \{\Global,\abs\}$ and $\dir'\in\{\Global,\abs,\caller\}$. The operators
$\StrictUntil^{\Global}_I$  and $\StrictSince^{\Global}_I$ are the standard \emph{timed  until and timed since} \MTL\ modalities, respectively,
 $\StrictUntil^{\abs}_I$  and
$\StrictSince^{\abs}_I$  are their
non-regular abstract versions, and  $\StrictSince_I^{\caller}$  is the non-regular caller version  of $\StrictSince^{\Global}_I$.
\MTL\ with past  %corresponds to
is the fragment of  \NMTL\ obtained by disallowing the timed  abstract and caller modalities, while
standard \MTL\ or future \MTL\ is the fragment of \MTL\ with past where the global  timed since   modalities are disallowed.
For an \NMTL\ formula $\varphi$, a  timed word $w=(\sigma,\tau)$ over $\Sigma_\Prop$  and %a position
$0\leq  i< |w|$, the satisfaction relation
$(w,i)\models\varphi$ is  defined as follows
(we omit the clauses for  propositions and Boolean connectives):
\vspace{-0.1cm}
\[ \begin{array}{ll}
   (w,i)\models \varphi_1 \StrictUntil_I^{\dir}\varphi_2 &
              \Leftrightarrow\,  \textrm{there is   }  j>  i \text{ s.t. }j\in \Pos(\dir,\sigma,i),\, (w,j)\models \varphi_2,\,\tau_j-\tau_i\in I,
       \\ & \phantom{\Leftrightarrow}\,\,  \text{ and }     (w,k)\models \varphi_1  \text{ for all } k\in [i+1,j-1]\cap \Pos(\dir,\sigma,i)  \\
   (w,i)\models \varphi_1 \StrictSince_I^{\dir'}\varphi_2 &
                 \Leftrightarrow\,  \textrm{there is   }  j<  i \text{ s.t. }j\in \Pos(\dir',\sigma,i),\, (w,j)\models \varphi_2,\,\tau_i-\tau_j\in I,
       \\ & \phantom{\Leftrightarrow}\,\,  \text{ and }     (w,k)\models \varphi_1  \text{ for all } k\in [j+1,i-1]\cap \Pos(\dir',\sigma,i)
\end{array}\vspace{-0.1cm}
\]
\noindent In the following, we use some derived operators in \NMTL:
\begin{compactitem}
  \item For $\dir\in \{\Global,\abs\}$,  $\StrictEventually^{\dir}_I \varphi := \top \, \StrictUntil^{\dir}_I\varphi $ %(\emph{timed $\dir$-eventually})
  and   $\StrictAlways^{\dir}_I \varphi:= \neg\StrictEventually^{\dir}_I\neg \varphi $% (\emph{timed $\dir$-always})
  \item for $\dir\in\{\Global,\abs,\caller\}$,  $\StrictPastEventually^{\dir}_I \varphi:= \top\, \StrictSince^{\dir}_I\varphi $ %(\emph{timed past $\dir$-eventually})
  and $\StrictPastAlways^{\dir}_I \varphi:= \neg\StrictPastEventually^{\dir}_I\neg \varphi $. % (\emph{timed past $\dir$-always})
\end{compactitem}\vspace{0.1cm}

Let  $\INTS$ be the set of \emph{nonsingular} intervals $J$ in $\RealP$ with endpoints in $\Nat\cup\{\infty\}$ such that either $J$ is unbounded, or $J$
  is left-closed with left endpoint $0$. Such intervals $J$ can be replaced by expressions of the form $\sim c$ for some $c\in\Nat$
  and $\sim\in\{<,\leq,>,\geq\}$.  We focus on the following two fragments of  \NMTL:
\begin{compactitem}
  \item \NMITLS: obtained by allowing only intervals in $\INTS$.
  \item \emph{Future} \NMTL: obtained by disallowing the variants of timed since modalities.
\end{compactitem}\vspace{0.1cm}

It is known that for the considered pointwise semantics, \MITLS\ \cite{AlurFH96} (the fragment of \MTL\ allowing only intervals in $\INTS$) and \ECTL\ are equally expressive~\cite{RaskinS99}. Here, we easily
generalize such a result to the nested extensions
of \MITLS\ and \ECTL.

\newcounter{lemma-globalEquivNMTL-ECNTL}
\setcounter{lemma-globalEquivNMTL-ECNTL}{\value{lemma}}

\begin{lemma}\label{lemma:globalEquivNMTL-ECNTL} There exist effective linear-time translations   from \ECNTL\ into \NMITLS, and vice versa.
\end{lemma}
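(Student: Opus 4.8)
The plan is to establish the two directions of the equivalence separately, in each case by a syntax-directed (compositional) translation that preserves models position-by-position, so that the resulting translation is linear in the size of the formula.

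\textbf{From \NMITLS\ to \ECNTL.} The key observation is that in \NMITLS\ every interval is of the form $\sim c$, i.e.\ either a lower-bound constraint ($>c$, $\geq c$) or an upper-bound constraint ($<c$, $\leq c$) or unbounded. I would first rewrite each timed modality $\varphi_1\StrictUntil_I^{\dir}\varphi_2$ using the standard decomposition into an \emph{untimed} until together with a \emph{one-sided} metric constraint expressed by the $\NextClock^{\dir}_I$ operator of \ECNTL, and symmetrically for $\StrictSince_I^{\dir'}$ using $\PrevClock^{\dir'}_I$. Concretely, for an upper-bound interval $I$ one uses that $\varphi_1\StrictUntil_I^{\dir}\varphi_2$ holds at $i$ iff there is a $\dir$-successor witness for $\varphi_2$ reached through $\varphi_1$-positions \emph{and} the time to the first $\dir$-position satisfying $(\varphi_1 \Until^{\dir} \varphi_2)$-relevant subformula lies in $I$; the point is that the recursive $\NextClock^{\dir}_I\psi$ operator of \ECNTL\ can constrain the distance to the next occurrence of an \emph{arbitrary formula} $\psi$, which is exactly what is needed to simulate the metric until. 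For lower-bound intervals one combines the untimed until with a requirement that $\varphi_2$ does \emph{not} hold too early, again using $\NextClock^{\dir}$ with the complementary interval (or $\{\NULL\}$), exactly as in the \MITL-to-\ECTL\ construction of~\cite{RaskinS99}. Since each \NMITLS\ operator is replaced by a fixed Boolean combination of \ECNTL\ operators applied to the recursively translated subformulas, the translation is linear.

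\textbf{From \ECNTL\ to \NMITLS.} Here I would go the other way: the untimed \CARET-style modalities $\Next^{\dir},\Prev^{\dir'},\Until^{\dir},\Since^{\dir'}$ are directly expressible via $\StrictUntil^{\dir}_{(0,\infty)}$ / $\StrictSince^{\dir'}_{(0,\infty)}$ (and $\top$ / Boolean combinations), using that $(0,\infty)\in\INTS$; the ``next'' operators are obtained from strict until over $(0,\infty)$ together with a formula asserting that the witnessing position is the immediate $\dir$-successor, which is itself an untimed nesting-aware statement definable in \NMTL. The genuinely metric operators $\NextClock^{\dir}_I\psi$ and $\PrevClock^{\dir'}_I\psi$ are the crux: $\NextClock^{\dir}_I\psi$ says the next $\dir$-position where $\psi$ holds is at distance in $I$, which is $\bigl(\neg\psi\bigr)\StrictUntil^{\dir}_I\psi$ \emph{provided} $I\in\INTS$. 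Since \ECNTL\ as stated allows arbitrary intervals $I$ with endpoints in $\Nat\cup\{\infty\}$, whereas \NMITLS\ only allows nonsingular $0$-or-unbounded intervals, I would need to argue that an arbitrary interval constraint in an \ECNTL\ formula can be rewritten as a Boolean combination of $\INTS$-constraints — this is the standard trick $[a,b] \equiv (\geq a)\wedge(\leq b)$, $(a,\infty)$ already in $\INTS$, and handling of singular intervals $\{c\}$ via $(\geq c)\wedge(\leq c)$ — but applied at the level of the recursive $\NextClock$/$\PrevClock$ operators this requires distributing the metric constraint over the two one-sided pieces while keeping the ``next occurrence'' semantics, which is delicate because $\bigl((\neg\psi)\StrictUntil^{\dir}_{\geq a}\psi\bigr)\wedge\bigl((\neg\psi)\StrictUntil^{\dir}_{\leq b}\psi\bigr)$ is \emph{not} literally the same as a single $[a,b]$-constrained ``next $\psi$''. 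The resolution is that ``the next $\dir$-position with $\psi$'' is unique, so the two one-sided until formulas refer to the same witness, and their conjunction does capture $\NextClock^{\dir}_{[a,b]}\psi$; one must also express ``there is no earlier $\psi$'' which is already built into the $(\neg\psi)\StrictUntil$ shape. I expect this reconciliation of interval shapes — and in particular making the uniqueness-of-next-witness argument precise across the three directions $\dir\in\{\Global,\abs,\caller\}$, including the boundary cases where the $\MAP$ or caller path is finite and the relevant successor is $\NULL$ — to be the main technical obstacle; the rest is routine structural induction establishing $(w,i)\models\varphi$ iff $(w,i)\models \mathrm{tr}(\varphi)$ for every timed word $w$ and position $i$.

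Finally, I would remark that linearity of both translations is immediate from their compositional definition (each operator contributes a constant-size gadget), and note that the construction and correctness proof are identical for finite and infinite timed words since all the modalities are evaluated locally with respect to $\Pos(\dir,\sigma,i)$, so no special treatment of the $\omega$-case is needed. The detailed case analysis would be deferred to the appendix of~\cite{compGandalf18}.
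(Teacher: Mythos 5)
Your proposal is correct and follows essentially the same route as the paper: both directions are obtained by lifting, modality by modality, the known \ECTL\,$\leftrightarrow$\,\MITLS\ equivalences of Raskin--Schobbens to the abstract and caller variants, with the two-sided (possibly singular) intervals of $\NextClock^{\dir}_I$/$\PrevClock^{\dir'}_I$ split into a conjunction of one-sided $\INTS$-constrained untils/sinces whose witnesses coincide because the ``next/previous $\dir$-occurrence'' is unique. Your identification of the recursive arguments of $\NextClock$/$\PrevClock$ and of the uniqueness-of-witness argument as the key points matches the paper's (appendix) proof, and linearity holds under the paper's DAG-based size measure.
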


A proof of Lemma~\ref{lemma:globalEquivNMTL-ECNTL} can be found in Appendix  E of \cite{compGandalf18}.  By
Lemma~\ref{lemma:globalEquivNMTL-ECNTL}
and Theorem~\ref{theorem:complexityResultsECNTL}, we obtain the following result.

 \begin{theorem} \ECNTL\ and \NMITLS\ are expressively equivalent. Moreover, satisfiability and visibly  model-checking for
 \NMITLS\ over finite (resp., infinite) timed words are \EXPTIME-complete.
 \end{theorem}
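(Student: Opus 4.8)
The plan is to establish the two conjuncts of Lemma~\ref{lemma:globalEquivNMTL-ECNTL} separately, giving syntax-directed translations in both directions, and then to combine the translation with Theorem~\ref{theorem:complexityResultsECNTL}. First I would handle the direction from \NMITLS\ into \ECNTL. The key observation is that, as in the regular case of \MITLS\ versus \ECTL~\cite{RaskinS99}, the only genuinely timed construct of \NMITLS\ is a timed until or timed since whose interval is of the restricted form $\sim c$. The idea is to rewrite $\varphi_1 \StrictUntil^{\dir}_I \varphi_2$ by separating the untimed eventuality from the metric constraint: intuitively, assert by an untimed $\Until^{\dir}$ that $\varphi_1$ holds until a position where $\varphi_2$ holds, and simultaneously constrain the timing of that witnessing position by an \ECNTL\ clock modality $\NextClock^{\dir}_I$ (for the future operators) or $\PrevClock^{\dir'}_I$ (for the since operators) applied to the auxiliary formula $(\neg\varphi_1) \vee \varphi_2$, exactly as \MITLS\ over $\INTS$ is coded into event clocks. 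Since $\dir\in\{\Global,\abs\}$ for until and $\dir'\in\{\Global,\abs,\caller\}$ for since, the \ECNTL\ syntax contains precisely the matching clock modalities, so the rewriting stays within \ECNTL; it is purely local and hence linear time. For the converse direction, from \ECNTL\ into \NMITLS, I would push the \ECNTL\ modalities into \NMITLS\ one by one: the untimed $\Until^{\dir}$, $\Since^{\dir'}$, $\Next^{\dir}$, $\Prev^{\dir'}$ are all definable using the timed $\StrictUntil$/$\StrictSince$ with the unbounded interval $[0,\infty)\in\INTS$ (together with Boolean combinations to recover the reflexive/strict and ``next step'' variants), and each clock modality $\NextClock^{\dir}_I\psi$ / $\PrevClock^{\dir'}_I\psi$ is coded back, again as in the regular \MITLS/\ECTL\ correspondence, by a constant-size \NMITLS\ formula that says ``the first future (resp.\ past) position along direction $\dir$ satisfying $\psi$ lies within $I$'' — using $\StrictEventually^{\dir}_I$, $\StrictEventually^{\dir}$, and negations to express the first-occurrence condition. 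One has to be slightly careful that the interval $I$ occurring in an \ECNTL\ modality is an arbitrary interval with natural endpoints, not necessarily in $\INTS$; but an arbitrary interval is a Boolean combination of intervals of the form $\sim c$, so $\NextClock^{\dir}_I$ is expressible as a Boolean combination of $\NextClock^{\dir}$-style constraints with $\INTS$-intervals, and this Boolean blow-up is still linear in the size of the binary encoding of the constants. Throughout I would check that the direction tags $\dir,\dir'$ are preserved, so that abstract modalities translate to abstract modalities and caller to caller, which is the reason the nested extension goes through verbatim from the flat case.

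Granting Lemma~\ref{lemma:globalEquivNMTL-ECNTL}, the expressive equivalence of \ECNTL\ and \NMITLS\ over finite and over infinite timed words is immediate. For the complexity part, the \EXPTIME\ upper bounds for satisfiability and visibly model-checking of \NMITLS\ follow by composing the linear-time translation from \NMITLS\ into \ECNTL\ with Theorem~\ref{theorem:complexityResultsECNTL}: an \NMITLS\ instance of size $n$ yields an \ECNTL\ formula of size $O(n)$, hence (by Theorem~\ref{theorem:complexityResultsECNTL}) a \VPTA\ of size $2^{O(n^{3})}$ with $O(n)$ clocks, whose emptiness — resp.\ intersection-emptiness against a given \VPTA, using polynomial-time closure under intersection — is decidable in time polynomial in the number of states and singly exponential in the number of clocks, i.e.\ in \EXPTIME. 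For the matching lower bounds, the cleanest route is to observe that the \ECNTL$\to$\NMITLS\ translation maps \CARET\ (the untimed fragment of \ECNTL) into \NMITLS, so \EXPTIME-hardness of satisfiability and of visibly model-checking transfers from \CARET~\cite{AlurEM04}; equivalently one may cite the known \EXPTIME-hardness already for the non-nested fragment and lift it, but going through \CARET\ via the translation is the most self-contained.

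The main obstacle I expect is the backward translation of the clock modalities $\NextClock^{\dir}_I$ and $\PrevClock^{\dir'}_I$ into \NMITLS: unlike in \MTL, a ``next occurrence where $\psi$ holds'' cannot be expressed directly, so one must phrase it as ``$\psi$ holds somewhere within $I$ along direction $\dir$, and $\psi$ holds nowhere strictly before that within $(0,\inf I)$'', and making this precise with the pointwise, possibly-open intervals of $\INTS$ — while keeping the formula linear-size and respecting the abstract/caller restrictions of the $\Pos(\dir,\sigma,i)$ quantification — requires some care; this is exactly the point where the detailed argument of Appendix~E of \cite{compGandalf18} does the bookkeeping, and where the subtlety about general intervals versus $\INTS$-intervals in the \ECNTL$\to$\NMITLS\ direction must be discharged.
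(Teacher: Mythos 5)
Your proposal matches the paper's proof: the theorem is obtained exactly as you describe, by combining the two linear-time translations of Lemma~\ref{lemma:globalEquivNMTL-ECNTL} (whose proof, deferred to Appendix~E of \cite{compGandalf18}, lifts the flat \MITLS/\ECTL\ correspondence of \cite{RaskinS99} direction-tag by direction-tag, just as you sketch, with the clock modalities $\NextClock^{\dir}_I\psi$ going back as $(\neg\psi)\,\StrictUntil^{\dir}_{I}\,\psi$ split over at most two $\INTS$-intervals) with Theorem~\ref{theorem:complexityResultsECNTL} for the upper bounds and with \EXPTIME-hardness of \CARET\ for the lower bounds. The one caveat is that your displayed recipe ``$\varphi_1\,\Until^{\dir}\varphi_2$ conjoined with $\NextClock^{\dir}_I(\neg\varphi_1\vee\varphi_2)$'' is sound only for the downward-closed intervals of $\INTS$; for the unbounded intervals $\sim\! c$ with $\sim\,\in\{>,\geq\}$ the first $(\neg\varphi_1\vee\varphi_2)$-position need not be the witness of the until, so one needs the other half of the Raskin--Schobbens coding, which you implicitly defer to and which does go through unchanged in the nested setting.
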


%In Subsection~\ref{sec:undecidabilityResult}, we show that  satisfiability of future  \NMTL\ over finite timed words is already indecidable. %is undecidable for both the finite-word and the infinite-word variants.

%\subsection{Undecidability of future  \NMTL\ over finite timed words}\label{sec:undecidabilityResult}

In the considered pointwise semantics setting, it is well-known that satisfiability of  \MTL\ with past is undecidable~\cite{AlurH93,OuaknineW06}. Undecidability already holds for future \MTL\ interpreted
over infinite timed words~\cite{OuaknineW06}. However, over finite timed words, satisfiability of future \MTL\ is instead decidable~\cite{OuaknineW07}. Here, we show that over finite timed words, the addition of the future abstract timed modalities to future
\MTL\   makes the satisfiability problem undecidable.

\begin{theorem}\label{theorem:undecidability} Satisfiability of future \NMTL\ interpreted over finite timed words is undecidable.
\end{theorem}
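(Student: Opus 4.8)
The plan is to reduce the halting problem for deterministic two-counter (Minsky) machines to satisfiability of future \NMTL\ over finite timed words. The classical template—used for the undecidability of future \MTL\ over infinite words in \cite{OuaknineW06}—is to encode a computation of a counter machine as a timed word in which each configuration occupies a unit-length time interval, and the value of a counter is represented by the number of occurrences of a designated ``unit'' proposition (say, $c_1$ or $c_2$) inside that interval. Increments and decrements are enforced by insisting that every unit event at time $t$ in one configuration interval is ``copied'' to a unit event at time $t+1$ in the next interval, except for the single unit that is created (increment) or deleted (decrement). The obstacle that blocks this for \emph{finite} future \MTL\ is that copying forward requires a constraint of the form ``if $c_1$ holds now then $c_1$ holds again exactly one time unit later'', and over finite words one can always truncate the model just before the obligation comes due, so no contradiction is forced; this is essentially why \cite{OuaknineW07} gets decidability in the finite case.

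My main idea is that the abstract-path modalities restore exactly the ``insurance'' that finiteness removed. The plan is to wrap each configuration block inside a matching call/return pair, so that the nesting structure lines up consecutive configuration intervals as abstract successors at the block level, and—more importantly—so that a unit event and its intended copy one time unit later lie on a common abstract path whose continuation is \emph{forced} to exist by well-matchedness. Concretely, I would structure the word so that the events encoding counter values are all internal actions occurring strictly between a call and its matching return, so that the abstract-next operator $\StrictEventually^{\abs}_{[1,1]}$ (or $\StrictUntil^{\abs}$ with the singleton interval $\{1\}$, expressible since we are in full \NMTL, not \NMITLS) from a unit position reaches the corresponding position in the next block; crucially, because the abstract successor inside a well-matched region is always defined until the matching return, the ``copy in one time unit'' obligation can no longer be discharged by truncation—the word would have to stop in the middle of a well-matched block, which the syntactic constraints (a global formula forcing calls to have returns, e.g. $\StrictAlways^{\Global}_{[0,\infty)}(\call \rightarrow \StrictEventually^{\abs}_{[0,\infty)}\ret)$, which is satisfiable over finite words only if the call is actually matched) forbid. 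So the finite model is forced to carry out the whole simulation or be rejected.

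The key steps, in order, would be: (1) fix the tape/timing encoding—one time unit per configuration, the first position of a block marked by a call, the last by the matching return, counter $k$'s value encoded by the count of internal positions carrying $c_k$ within the block, with timestamps chosen so that position-by-position correspondence between consecutive blocks means ``same offset within the unit interval''; (2) write future \NMTL\ formulas enforcing well-formedness of the encoding: each block has length exactly $1$ in time, blocks are contiguous, the $c_k$ events are linearly ordered with a little separation, and every call is matched (using an abstract eventually to force the return to exist, which over finite words is a genuine constraint); (3) write the ``faithful successor'' formulas: for each $c_k$-event, an abstract-distance-exactly-$1$ operator asserts a matching $c_k$-event in the next block, and conversely every $c_k$-event in a non-initial block has an abstract predecessor at distance $1$, with a single controlled exception at the instruction position implementing increment/decrement and zero-tests via the presence/absence of a $c_k$-event at a distinguished sub-interval; (4) encode the finite control of the two-counter machine as a regular (\LTL-style, hence future-\MTL-expressible) constraint over block labels, and assert reachability of the halting instruction; (5) argue both directions of the reduction: a halting computation yields a finite timed word satisfying the conjunction, and conversely any finite model, because the abstract-successor obligations cannot be truncated away, must encode a genuine halting run.

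The hard part will be step (3) combined with the truncation argument: I must choose the nesting and the interval constraints so that the ``$+1$ time unit'' copy obligation is genuinely inescapable over \emph{finite} words—this is precisely where naive future \MTL\ fails—and this requires carefully arranging that the witness position for each abstract metric obligation lies strictly before the matching return of an already-present call, so that the obligation is ``paid for'' inside a region the model is syntactically committed to completing. A secondary delicate point is handling the boundary positions of unit intervals (time $t$ vs.\ $t+1$ ambiguities, coincident timestamps) and ensuring the abstract path actually threads through the intended positions rather than jumping over nested sub-blocks; I would likely avoid nested sub-blocks within a configuration entirely, using a single call/return pair per configuration and only internal actions inside, so that the abstract successor inside a block is just the ordinary successor, which keeps the metric bookkeeping transparent while still giving the non-truncatability that plain future \MTL\ lacks. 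I expect the full formal construction and the verification of step (5) to be the bulk of the work, and I would defer the detailed formulas to an appendix.
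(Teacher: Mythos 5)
Your reduction is from the right problem and the block-per-configuration, one-time-unit-per-configuration encoding matches the paper's, but the central mechanism you propose does not work, for two reasons. First, the reason future \MTL\ is decidable over finite words is not that obligations can be ``truncated away'': a formula such as $\Always^{\Global}(c_h\rightarrow\StrictEventually^{\Global}_{[1,1]}c_h)$ already forces any finite model containing a $c_h$-event to contain its copy one time unit later (truncating before the copy falsifies the formula rather than satisfying it). The genuine obstacle is that future modalities enforce only the \emph{lower-bound} direction of the copying ($n'_h\geq n_h$): nothing prevents spurious extra counter events from being inserted into the next block, and it is exactly this one-sidedness (insertion errors) that yields decidability in the finite case. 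Your step (3) states the needed converse --- ``every $c_k$-event in a non-initial block has an abstract predecessor at distance $1$'' --- but that is a past constraint, and you never say how to express it in the \emph{future} fragment; making forward obligations ``inescapable'' via well-matchedness does nothing to rule out insertions. Second, with your chosen encoding (one call/return pair per configuration, counters as internal actions inside), the abstract metric operators cannot even express the forward copying: the \MAP\ through an internal position of a block terminates at the last internal position of that block (its global successor is a return), so $\StrictEventually^{\abs}_{[1,1]}$ never reaches the next configuration block and you are forced back to global modalities anyway.

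The paper's proof supplies precisely the missing idea. A computation $\sigma$ is encoded as the well-matched word $(\call,\sigma)\cdot(\ret,\sigma^{R})$: the entire forward encoding is a block of calls followed by its \emph{reverse} as the matching returns. On the reversed half, the block of $C'$ precedes that of $C$, so a purely future constraint ``every counter event here has a copy at distance $1$'' imposed on the return part enforces $n'_h\leq n_h+1$, i.e.\ the no-insertion direction on the call part; combined with the analogous constraint on the call part one gets equality. The only abstract modality needed is the untimed $\Next^{\abs}$, used once to force the palindromic well-matched shape via $\call\wedge\Next^{\abs}(\neg\Next^{\Global}\true)\wedge\cdots$; all metric constraints are ordinary future \MTL. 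To repair your argument you would need to replace the ``non-truncatability'' mechanism with this reversal trick, or find some other way to simulate backward copying with future operators.
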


We prove Theorem~\ref{theorem:undecidability} by a reduction from the halting problem
for Minsky $2$-counter machines~\cite{Minsky67}. Fix such a machine $M$
which is a tuple $M = \tpl{\Lab,\Inst,\ell_\init,\ell_\halt}$, where $\Lab$ is a finite set of labels (or program counters), $\ell_\init,\ell_\halt\in \Lab$, and $\Inst$ is a mapping assigning to
 each label $\ell\in\Lab\setminus\{\ell_\halt\} $  an instruction for either
\begin{inparaenum}[(i)]
  \item  \emph{increment}: $c_h:= c_h+1$; \texttt{goto} $\ell_r$, or
  \item  \emph{decrement}: \texttt{if} $c_h>0$ \texttt{then} $c_h:= c_h-1$; \texttt{goto} $\ell_s$ \texttt{else goto} $\ell_t$, 
\end{inparaenum}%\vspace{0.1cm}
where $h \in \{1, 2\}$,  $\ell_s\neq \ell_t$, and $\ell_r,\ell_s,\ell_t\in\Lab$.

The machine $M$ induces a transition relation $\longrightarrow$ over configurations of the form
$(\ell, n_1, n_2)$, where $\ell$ is a label of an instruction to be executed and $n_1,n_2\in\Nat$ represent
current values of counters $c_1$ and $c_2$, respectively. A  computation of $M$ is a finite sequence
$C_1\ldots C_k$ of configurations such that $C_i \longrightarrow C_{i+1}$ for all $i\in [1,k-1]$.
The machine $M$ \emph{halts} if there is a computation starting at $(\ell_\init, 0, 0)$ and leading to  configuration
$(\ell_{\halt}, n_1, n_2)$ for some $n_1,n_2\in\Nat$. The halting problem is
to decide whether a given machine $M$ halts. The problem is undecidable~\cite{Minsky67}.
  We adopt the following notation, where $\ell\in\Lab\setminus\{\ell_\halt\} $:
\begin{compactitem}
  \item (i) if $\Inst(\ell)$ is an increment instruction of the form  $c_h:= c_h+1$; \texttt{goto} $\ell_r$, define $c(\ell):= c_h$ and $\Succ(\ell):= \ell_r$;
  (ii) if $\Inst(\ell)$ is a decrement instruction  of the form \texttt{if} $c_h>0$ \texttt{then} $c_h:= c_h-1$; \texttt{goto}  $\ell_r$ \texttt{else goto} $\ell_s$,
  define $c(\ell):= c_h$, $\dec(\ell):= \ell_r$, and $\zero(\ell):= \ell_s$.
\end{compactitem}
\vspace{0.2cm}

We encode the computations of $M$ by using finite words over the pushdown alphabet $\Sigma_\Prop$,
where $\Prop = \Lab\cup \{c_1,c_2\}\cup\{\call,\ret,\intA\}$. For a finite word $\sigma=a_1 \ldots a_n$ over $\Lab\cup\{c_1,c_2\}$, we denote by $\sigma^{R}$ the reverse of $\sigma$, and by $(\call,\sigma)$
(resp., $(\ret,\sigma)$)  the finite word  over $\Sigma_\Prop$ given by $\{a_1,\call\}\ldots \{a_n,\call\}$ (resp., $\{a_1,\ret\}\ldots \{a_n,\ret\}$).
We associate to each $M$-configuration $(\ell,n_1,n_2)$ two distinct encodings: the \emph{call-code} which is the finite word over $\Sigma_\Prop$
given by $(\call,\ell c_1^{n_1} c_2^{n_2})$, and the \emph{ret-code} which is given by  $(\ret,(\ell c_1^{n_1} c_2^{n_2})^{R})$ intuitively corresponding to
the matched-return version of the call-code. A computation $\pi$ of $M$ is then represented by the well-matched word
$(\call,\sigma_\pi)\cdot (\ret,(\sigma_\pi)^{R})$, where  $\sigma_\pi$ is obtained by concatenating the call-codes of the individual configurations along $\pi$.

Formally, let $\Lang_\halt$ be the set of finite words over $\Sigma_\Prop$ of the form
$(\call,\sigma)\cdot (\ret, \sigma^{R})$ (\emph{well-matching requirement}) such that the call part $(\call,\sigma)$ satisfies:
\begin{compactitem}
  \item \emph{Consecution:} $(\call,\sigma)$ is a sequence of call-codes, and for each pair $C\cdot C'$ of adjacent call-codes,
  the associated $M$-configurations, say   $(\ell,n_1,n_2)$ and $(\ell',n'_1,n'_2)$,  satisfy: $\ell\neq \ell_\halt$ and
\begin{inparaenum}[(i)]
  \item  if $\Inst(\ell)$ is an increment instruction and $c(\ell)= c_h$, then $\ell'=\Succ(\ell)$ and $n'_h>0$;
  \item      if   $\Inst(\ell)$ is a decrement instruction and $c(\ell)= c_h$,  then
   \emph{either} $\ell'=\zero(\ell)$ and $n_h=n'_h=0$,
     \emph{or}  $\ell'=\dec(\ell)$ and $n_h>0$.
\end{inparaenum}%\vspace{0.1cm}
  \item \emph{Initialization:}  $\sigma$ has a prefix of the form $\ell_\init \cdot \ell$ for some $\ell\in \Lab$.
    \item \emph{Halting:} $\ell_\halt$ occurs along $\sigma$.
  \item For each pair $C\cdot C'$ of adjacent call-codes in $(\call,\sigma)$ %, with $C$ preceding $C'$
  with $C'$ non-halting,
  the relative $M$-configura\-tions  $(\ell,n_1,n_2)$ and $(\ell',n'_1,n'_2)$  satisfy:\footnote{For technical convenience, we do not require that the counters in a configuration having as successor an halting configuration are correctly updated.}
\begin{inparaenum}[(i)]
    \item  \emph{Increment requirement:} if $\Inst(\ell)$ is an increment instruction and $c(\ell)= c_h$, then $n'_h=n_h+1$ and $n'_{3-h}=n_{3-h}$;
        \item \emph{Decrement requirement:} if   $\Inst(\ell)$ is a decrement instruction and $c(\ell)= c_h$,  then $n'_{3-h}=n_{3-h}$, and, if
    $\ell'=\dec(\ell)$, then $n'_h=n_h-1$.
\end{inparaenum}%\vspace{0.1cm}
\end{compactitem}\vspace{0.1cm}

Evidently, $M$ halts iff $\Lang_\halt\neq \emptyset$. We construct in polynomial time a future  \NMTL\ formula $\varphi_M$ over $\Prop$   such that
the set of untimed components $\sigma$ in the finite timed words $(\sigma,\tau)$ satisfying $\varphi_M$ is exactly $\Lang_\halt$.
Hence, Theorem~\ref{theorem:undecidability} directly follows. In the construction of $\varphi_M$, we exploit the future \LTL\ modalities and the abstract next modality $\Next^{\abs}$ which can be expressed in future
\NMTL.

Formally, formula $\varphi_M$ is given by
$
\varphi_M:= \varphi_{\textit{WM}} \vee \varphi_\LTL \vee \varphi_{\textit{Time}}
$
where %the conjunct
$\varphi_{\textit{WM}}$ is a future \CARET\ formula ensuring the well-matching requirement;
$
\varphi_{\textit{WM}}:= \call \wedge  \Next^{\abs}(\neg\Next^{\Global}\true) \wedge \Always^{\Global}\neg\intA \wedge \neg \Eventually^{\Global}(\ret \wedge \Eventually^{\Global} \call).
$
 The conjunct $\varphi_\LTL$ is a standard future \LTL\ formula ensuring the consecution, initialization, and halting requirements. The definition of $\varphi_\LTL$ is straightforward and we omit the details of the construction.
Finally, we illustrate the construction of the conjunct $\varphi_{\textit{Time}}$ which is a future \MTL\ formula enforcing the increment and decrement requirements by means of time constraints.
Let $w$ be a finite timed word over $\Sigma_\Prop$. By the formulas $\varphi_{\textit{WM}}$ and $\varphi_\LTL$, we can assume that the untime part of $w$ is of the
form $(\call,\sigma)\cdot (\ret, \sigma^{R})$ such that the call part $(\call,\sigma)$ satisfies the consecution, initialization, and halting requirements. Then, formula
$\varphi_{\textit{Time}}$ ensures the following additional requirements:
\begin{compactitem}
  \item \emph{Strict time monotonicity: } the time distance between distinct positions is always greater than zero. This can be expressed by the formula
  $\Always^{\Global}(\neg \StrictEventually^{\Global}_{[0,0]}\top)$.
  \item \emph{1-Time distance between adjacent labels: } the time distance between the $\Lab$-positions of two adjacent $\call$-codes
  (resp., $\ret$-codes) is $1$. This can be expressed as follows:\vspace{0.1cm}

  \hspace{2.0cm}$
  \displaystyle{\bigwedge_{t\in\{\call,\ret\}} \Always^{\Global}\Bigl([t\wedge \bigvee_{\ell\in\Lab}\ell \wedge \StrictEventually^{\Global}(t\wedge \bigvee_{\ell\in\Lab}\ell)] \rightarrow \StrictEventually^{\Global}_{[1,1]}(t\wedge \bigvee_{\ell\in\Lab}\ell) \Bigr)}
   $\vspace{0.1cm}
 \item \emph{Increment and decrement requirements:} fix a $\call$-code $C$ along the call part immediately followed by some non-halting call-code $C'$. Let
 $(\ell,n_1,n_2)$ (resp., $(\ell',n'_1,n'_2)$) be the configuration encoded by $C$ (resp., $C'$), and $c(\ell)=c_h$ (for some $h=1,2$). Note that
 $\ell\neq \ell_\halt$. First, assume that  $\Inst(\ell)$ is an increment instruction.   We need to enforce that
 $n'_h=n_h+1$ and $n'_{3-h}=n_{3-h}$. For this, we first require that:
% \begin{itemize}
 %  \item [(*)]
 (*) for every $\call$-code $C$ with label $\ell$, every  $c_{3-h}$-position has a future  call $c_{3-h}$-position  at (time) distance $1$,
   and every  $c_{h}$-position  has a  future call $c_{h}$-position $j$ at  distance $1$ such that
   $j+1$ is still a call $c_{h}$-position.
% \end{itemize}

 By the strict time monotonicity and the 1-Time distance between adjacent labels, the above requirement~(*) ensures that
 $n'_h\geq n_h+1$ and $n'_{3-h}\geq n_{3-h}$. In order to enforce that $n'_h\leq n_h+1$ and $n'_{3-h}\leq n_{3-h}$, we crucially exploit the return part  $(\ret, \sigma^{R})$
corresponding to the reverse of the call part $(\call,\sigma)$. In particular, along the return part, the reverse of $C'$ is immediately followed by the reverse of $C$. Thus, we additionally require that:
  %  \begin{itemize}
  % \item
  (**)  for every \emph{non-first} $\ret$-code $R$ which is immediately followed by a $\ret$-code with label $\ell$, each  $c_{3-h}$-position has a future $c_{3-h}$-position  at  distance $1$,
   and each non-first $c_{h}$-position of $R$  has a future  $c_{h}$-position at  distance $1$.
 %\end{itemize}

 Requirements~(*) and~(**) can be expressed by the following two formulas.\vspace{0.1cm}

\hspace{1.5cm}  $
  \Always^{\Global}\Bigl((\call\wedge \ell) \rightarrow \StrictAlways^{\Global}_{[0,1]}[(c_{3-h} \rightarrow \StrictEventually^{\Global}_{[1,1]}c_{3-h})\wedge (c_{h} \rightarrow \StrictEventually^{\Global}_{[1,1]}(c_{h}\wedge \Next^{\Global} c_h))]\Bigr)
  $\vspace{0.1cm}

\hspace{0.4cm}$
 %\begin{array}{l}
 \displaystyle{\bigwedge_{\ell'\in \Lab}} %&
  \Always^{\Global}\Bigl((\ret\wedge \ell'\wedge \StrictEventually^{\Global}_{[2,2]}\ell) \longrightarrow
 %&\quad
 \StrictAlways^{\Global}_{[0,1]}\bigl( [ c_{3-h} \rightarrow \StrictEventually^{\Global}_{[1,1]}c_{3-h}]\wedge
   [(c_{h}\wedge \Next^{\Global} c_h)  \rightarrow \Next^{\Global}\StrictEventually^{\Global}_{[1,1]}c_{h}]\bigr)\Bigr)
%\end{array}
$\vspace{0.1cm}

Now, assume that $\Inst(\ell)$ is a decrement instruction.  We need to enforce that
 $n'_{3-h}=n_{3-h}$, and whenever $\ell'=\dec(\ell)$, then $n'_h=n_h-1$. This can be ensured by requirements similar to Requirements~(*) and~(**), and we omit the details.
\end{compactitem}\vspace{0.1cm}

\noindent Note that the unique abstract modality used in the reduction is $\Next^{\abs}$. This concludes the proof of Theorem~\ref{theorem:undecidability}.

\section{Conclusions}
We have introduced two timed linear-time temporal logics for specifying real-time context-free requirements in a pointwise semantics setting: Event-Clock Nested Temporal Logic  (\ECNTL) and Nested Metric Temporal Logic (\NMTL).
 We have shown that while \ECNTL\ is decidable and tractable, \NMTL\ is undecidable even for its future fragment interpreted over finite timed words.
 Moreover, we have established that the \MITLS-like fragment \NMITLS\ of \NMTL\ is decidable and tractable.
 As future research, we shall investigate the decidability properties for the more general fragment of \NMTL\ obtained by disallowing singular intervals.
 Such a fragment represents the \NMTL\ counterpart of Metric Interval Temporal Logic (\MITL), a well-known decidable (and \EXPSPACE-complete) fragment of \MTL\ \cite{AlurFH96}
 which is strictly more expressive than \MITLS\ in the pointwise semantics setting~\cite{RaskinS99}.

\bibliographystyle{eptcs}
\bibliography{bib2}

\end{document}